\documentclass[lettersize,journal]{IEEEtran}
\usepackage{amsmath,amsfonts}
\usepackage{algorithmic}
\usepackage{algorithm}
\usepackage{array}
\usepackage[caption=false,font=normalsize,labelfont=sf,textfont=sf]{subfig}
\usepackage{textcomp}
\usepackage{stfloats}
\usepackage{url}
\usepackage{verbatim}
\usepackage{graphicx}
\usepackage{cite}
\usepackage{multirow}
\usepackage{xcolor}  
\usepackage{amssymb}   % 或 mathtools
\usepackage{booktabs}  % 导言区加入
\usepackage{pifont}

\usepackage{amsthm}

\newtheorem{definition}{\bf Definition}
\newtheorem{proposition}{\bf Proposition}
\newtheorem{theorem}{\bf Theorem}
\newtheorem{lemma}{\bf Lemma}
\newtheorem{corollary}{\bf Corollary}
\newtheorem{remark}{\bf Remark}

\begin{document}

\title{On Capacity and Delay of Wireless Networks with Node Failures}
\author{
  \IEEEauthorblockN{Wei Li, Min Sheng, \emph{Fellow, IEEE},  Junyu Liu,        \emph{Member, IEEE} and Jiandong Li, \emph{Fellow, IEEE}}
%  \IEEEauthorblockA{Department of Electrical Engineering 
%                    University 1
%                    City 1
%                    Email: author1@university1.edu}
%  \and
%  \IEEEauthorblockN{}
\\
  \IEEEauthorblockA{ State Key Laboratory of Integrated Service Networks, Xidian University, Xi’an, Shaanxi, 710071, China
 \\Email: xidianliwei@stu.xidian.edu.cn, \{junyuliu, yangzheng\}@xidian.edu.cn, \{msheng,jdli\}@mail.xidian.edu.cn
               }
}

        % <-this % stops a space
%\thanks{This paper was produced by the IEEE Publication Technology Group. They are in Piscataway, NJ.}% <-this % stops a space
%\thanks{Manuscript received April 19, 2021; revised August 16, 2021.}
%}

% The paper headers
%%\markboth{Journal of \LaTeX\ Class Files,~Vol.~14, No.~8, August~2021}%
%{Shell \MakeLowercase{\textit{et al.}}: A Sample Article Using IEEEtran.cls for IEEE Journals}

%\IEEEpubid{0000--0000/00\$00.00~\copyright~2021 IEEE}
% Remember, if you use this you must call \IEEEpubidadjcol in the second
% column for its text to clear the IEEEpubid mark.

\maketitle

\begin{abstract}
One key challenge in designing resilient large-scale wireless ad hoc networks is to understand how random node failures affect fundamental network performance. In this work, we show that both network capacity and delay scale as \scalebox{0.65}{$\textstyle \Theta\left(\sqrt{\frac{n(1-q)}{\log n}}\right)$}, where $n$ is the total number of nodes and $q$ is the node failure probability.  The network capacity degenerates to the classical  result given by P. Gupta and P. R. Kumar when $q=0$. Based on these results, we find that even with the same number of non-faulty nodes, a network with $n$ nodes and node failure probability $q$ has lower network capacity than a failure-free network with $n(1-q)$ nodes. To compensate for the network capacity loss caused by random node failures, at least $\epsilon(n,q) nq$ redundant nodes are required, where $\epsilon(n,q)>1$.  We further prove that the optimal trade-off between network capacity and delay remains  $O(1)$ regardless of node failures, implying that high network capacity and low delay  cannot be achieved simultaneously. These results demonstrate robustness against stochastic variations in wireless channels.
\end{abstract}

\begin{IEEEkeywords}
Network capacity,  delay,  node failures, redundancy node, network topology.
\end{IEEEkeywords}

\section{Introduction}
Wireless ad hoc networks with self-configuration, self-optimization, and self-healing capabilities are crucial in emergency response, disaster rescue and military applications, etc\cite{ref2,Y1}. In  wireless ad hoc networks, data packets are transported with high probability through the multi-hop transmission strategy, which effectively mitigates intra-network interference  and enhances spatial reuse ratio by avoiding long-distance transmissions\cite{ref13}.  One of the necessary conditions for implementing the multi-hop transmission is that the network topology should be connected with probability one, i.e.,  there is at least one  path  between any  two   nodes.  In practice, however,  the network topology may experience intermittent connectivity due to deteriorating channel conditions, malicious interference,  node failures, node mobility, and other factors \cite{ref9,ref10,ref12}. Among these factors, the impact of node failures on the connectivity of network topology is irreversible, which can deteriorate  the network traffic-carrying capacity and increase delay.    Based on the above analysis, modeling and quantifying the impact of node failures on network capacity, delay, and the trade-off among them is the cornerstone of designing a wireless ad hoc network with  resilience.  

Accurately characterizing the network capacity region and delay of wireless ad hoc networks is  challenging, espeically when  the number of nodes increases.  As an alternative, analyzing the asymptotic scaling behavior of network capacity and delay can also provide insights into the fundamental performance limit of wireless ad hoc networks. Specifically, given that an optimal space-time  scheduling strategy is provided, the network capacity  is $\Theta \left(\sqrt{n}\right)$ bits/s, where $n$ is the number of nodes \cite{ref13}. This means that the rate at which all source nodes can successfully transmit data packets to their corresponding destination nodes in the same time slot is $\Theta \left(1/\sqrt{n}\right)$ bits/s. Compared to the direct transmission strategy, the data rate between each source-destination pair can be improved from $\Theta \left(1/n\right)$ to $\Theta \left(1/\sqrt{n}\right)$ with the help of the multi-hop  transmission strategy.  Furthermore, the mobility-assisted two-hop scheduling strategy can achieve linear growth in network capacity with respect to the number of nodes, i.e.,  $\Theta\left(n\right)$,  which would yet result in  intolerable delay \cite{CD0}. This is undesirable for most applications.   Following  these works, a number of researchers  have conducted in-depth studies on the relationship between network capacity and delay from various perspectives.  The main results are summarized in Table \ref{Tab1}. The core   is to investigate the asymptotic scaling  behavior of network capacity and delay under various networking schemes. Compared to the results in Table \ref{Tab1}, the networking scheme is the key factor that constrains the asymptotic behavior of network capacity and delay. Designers can enhance network capacity and reduce delay by designing networking schemes that incorporate new dimensions,  such as node mobility, redundancy, coding and geometric information, etc. All the aforementioned  studies are conducted under the assumption that nodes could always operate well. The results fail to reveal the impact of node failures on network capacity and delay, thereby hindering the design of wireless ad hoc networks with resilience.  Recently, the  network capacity of  wireless ad hoc networks after experiencing zone node failures has been studied in \cite{ref14,ref15,ref24R}. It is demonstrated that increasing the connectivity of  network  topology to combat node failures is feasible  only when the number of failure regions is less than $\Theta\left(\sqrt{n}\right )$. It is also pointed out  that the order of network capacity, when the network encounters node failures, follows the same asymptotic scaling law as in \cite{ref13}. From the statistical average perspective, the result obtained using the proof method in \cite{ref13} is correct. However, those results do not capture the impact of random node failures on the network connectivity  and  network capacity. To this end, we will quantify this subtle difference from the perspective of network connectivity.   We show that this overestimates the network capacity when the deployed wireless ad hoc networks encounter random node failures. Meanwhile, the delay and the optimal trade-off between network capacity and delay are given.

\begin{table}[!t]
\caption{Network Capacity and Delay in Different Networking Schemes}
\label{Tab1}
\centering
\small
\setlength{\tabcolsep}{4pt}   % 列间距稍微加大
\renewcommand{\arraystretch}{1.75} % 行距变宽（关键）

\begin{tabular}{lccc}
\toprule
\textbf{Networking Scheme} & \textbf{Capacity} & \textbf{Delay} & \textbf{Ref.} \\
\midrule

Multi-hop + Mobility 
& $\Theta\!\left(\frac{\min\{m,n\}}{\log^{3}n}\right)$ 
& $\Theta\!\left(\frac{1}{v}\right)$ 
& \cite{CD1} \\[2pt]

Multi-hop + Redundant  
& $\Theta\!\left(1/\log n\right)$ 
& $\Theta(\log n)$ 
& \cite{CD2} \\[2pt]

Two-hop 
& $\Theta(\sqrt{n})$ 
& $\begin{array}{c}
\Theta(n^{2\beta}) \\
{\scriptstyle 0 \leq \beta \leq 0.5}
\end{array}$ 
& \cite{CD3} \\[4pt]

Multi-hop + Coding 
& $\Theta(\sqrt{Dn})$ 
& $n^{1/3} < D < n$ 
& \cite{CD4} \\[2pt]

Multi-hop    
& $\Theta(\sqrt{n})$ 
& $\Theta(\sqrt{n})$ 
& \cite{ref19} \\[2pt]

Geometric multi-hop    
& $\Theta\!\left(\frac{n}{\log n \log\log n}\right)$ 
& $\Theta\!\left(\frac{1}{v}\right)$ 
& \cite{CD5} \\

\bottomrule
\end{tabular}
\end{table}

 The contributions are summarized as follows.
\begin{itemize}
\item We derive the asymptotic scaling laws for both network capacity and delay  in wireless ad hoc networks with node failure probability $q$  as \scalebox{0.5}{$ \Theta \left( \sqrt{\frac{n(1-q)}{\log n}}\right)$}. The upper bound on network capacity can be achieved by utilizing multi-hop  transmission strategy.  When $q=0$, our results degenerate into the  classical results in \cite{ref13}. Based on  the above conclusions, increasing redundant  nodes  is a feasible strategy for both restoring the connectivity of network topology and mitigating node failure randomness. In particular, to maintain the network capacity at \scalebox{0.5}{$\Theta\left(\sqrt{\frac{n}{\log {n}}}\right)$} in  wireless ad hoc networks with node failures, at least $n_{1} = \epsilon(n,q) nq$ redundant nodes should be deployed, where $\epsilon(n,q)>1$. 
\item It is shown that   the wireless ad hoc networks with $n$ nodes and node failure probability $q$ have lower network capacity than one with $n(1-q)$ nodes without considering node failures, even though both networks contain the same number of non-faulty nodes. For the case with node failures, the loss of network
capacity is utilized to mitigate the impact of node failure randomness  on the connectivity of  network topology, as the expected number of non-faulty nodes is $n(1-q)$. The numerical results indicate that when $q=0.85$, the network capacity loss rates corresponding to $n=$ 100, 1000, and 10,000 are 24\%, 15\%, and 11\% respectively, compared to scenarios where node failures are not considered.

\item The ratio of network capacity to delay is $O(1)$, which is also order-optimal in the presence of node failures. This reveals that enhancing network capacity  comes at the cost of increased delay and  this relationship is independent of the node failure probability and spatial dimension.  In other words,   it is not possible to simultaneously achieve high network capacity and low delay in multi-hop wireless ad hoc networks.   For scenarios with lower delay,  using one hop or fewer for data transmission may be a preferable strategy. Conversely, for scenarios with high traffic load, the multi-hop  transmission is the optimal strategy. 
\end{itemize}

Notations: $f(n)=O(g(n))$ if there exist $c>0$ and $n_{0}>0$ such that $f(n)<cg(n)$ for $n>n_{0}$. $f(n)=\Omega(g(n))$ if $g(n)=O(f(n))$.
$f(n)=\Theta(g(n))$ if $f(n)=O(g(n))$ and $g(n)=O(f(n))$.  $\rho(x,y)$ is the Euclidean distance between positions $x$ and $y$. $Card(A)$ denotes the cardinality of the set $A$.

\section{System model and definitions}
We consider a wireless ad hoc network consisting of 
$n$ nodes. All the nodes are uniformly and independently distributed in unit square $\mathcal{A}=[0,1]^{2}$. The coordinate of node $i$ is denoted as $x_{i}$, where
$i\in I=\{1,2,3,\cdot\cdot\cdot, n\}$. All nodes use the same  transmission power $P$,  i.e.,  they have the same transmission radius, which  is a function of the number of nodes.  Each node is equipped with either an omnidirectional or a directional  antenna based on  scheduling constraints and practical requirements. All transmissions use the full bandwidth $W$.  $N(n)=\frac{n}{2}$ unicast data streams are generated by using a random, uniformly matching method. Each node serves as either a source or a destination for one  data stream and also relays data packets from other data streams. Time is slotted for packetized transmission. In most mobility scenarios, moreover, nodes move much more slowly than the time it takes to transmit a data packet. Therefore, it is reasonable to model mobile networks as static when analyzing capacity.

\subsection{Network topology}

Due to the factors  including physical disruptions and both external and internal  interference, etc., the state of the nodes becomes highly unreliable, which results in  intermittent connectivity in the network topology.  To describe the random failure behavior of nodes, we define the node failure probability, denoted as $q$. Let $\chi$ denote the set of non-faulty nodes   and $\chi^{c}$ denote the set of faulty nodes. $Card(\chi)+Card(\chi^{c})=n$, where $Card(\chi)=(1-q)n$. The non-faulty nodes can form a random geometric graph  $G(n,r_{q}(n))$, where $r_{q}(n)$ is the critical transmission radius, which is the minimum transmission radius that ensures $G(n,r_{q}(n))$ is connected with  probability 1. In $G(n,r_{q}(n))$, the rule for establishing a link between two nodes is given by
\begin{equation}
    w(x_{i},x_{j}) = \begin{cases} 
1, & \: \rho(x_{i},x_{j})\leq r_{q}(n), \\\\
0. & \text{otherwise}.
\end{cases}
\label{Connec0121}
\end{equation}
where $\rho(x,y)$ is the Euclidean distance between point  $x$ and point  $y$. When wireless channel conditions are taken into account, a link between node \(x_{i}\) and node \(x_{j}\) is established with a certain probability. This implies that nodes separated by a distance greater than \(r_{q}(n)\) may also form a link. Such minor random fluctuations do not fundamentally affect the scaling behavior of the network capacity (see Appendix \ref{AppendixA1}).

$G(n,r_{q}(n))$ is  connected if and only if there exists a path between any pair of nodes. In other words, the random geometric graph contains no isolated nodes with probability 1 as $n \to +\infty$. The definition of the connectivity of  network  topology is given from \cite{ref16} as follows.

\begin{definition}The random geometric graph $G(n,r_{q}(n))$ formed by all non-faulty nodes  is connected, if  

   a) For any $x_{i},x_{j}\in\chi, i\neq j$, there exists at least one  data path between the two nodes;

    b) For any $x_{t}\in\chi^{c}$, $N^{+}(x_{t})\cap \chi\neq\emptyset$, where $N^{+}(x_{t})=\{x_{i}\:|\:\rho(x_{i},x_{t})\leq r_{q}(n),\: x_{i}\in \chi \}$ is the set of neighbor nodes of the faulty node $x_{t}$.  
\end{definition}

 After the deployment of a wireless ad hoc network, node failures occur randomly and instantaneously. Therefore, to ensure network connectivity, condition b) in Definition 1 requires that each failed node be located within the transmission range of at least one non-faulty node. In addition, condition b) effectively ensures that a failed node once recovered to a non-faulty state can rejoin the network structure formed by the non failed nodes with  probability one.   We conducted simulations on the connectivity of  network structure  for different node failure probabilities as the node transmission radius varies, as shown in Fig. \ref{Fig2}. The network  topology is connected with probability 1 when the transmission radius is greater than the  critical transmission radius. Otherwise, the network topology is not connected with probability one. 
 
 \begin{figure}[!t]
\centering
\includegraphics[scale=0.525]{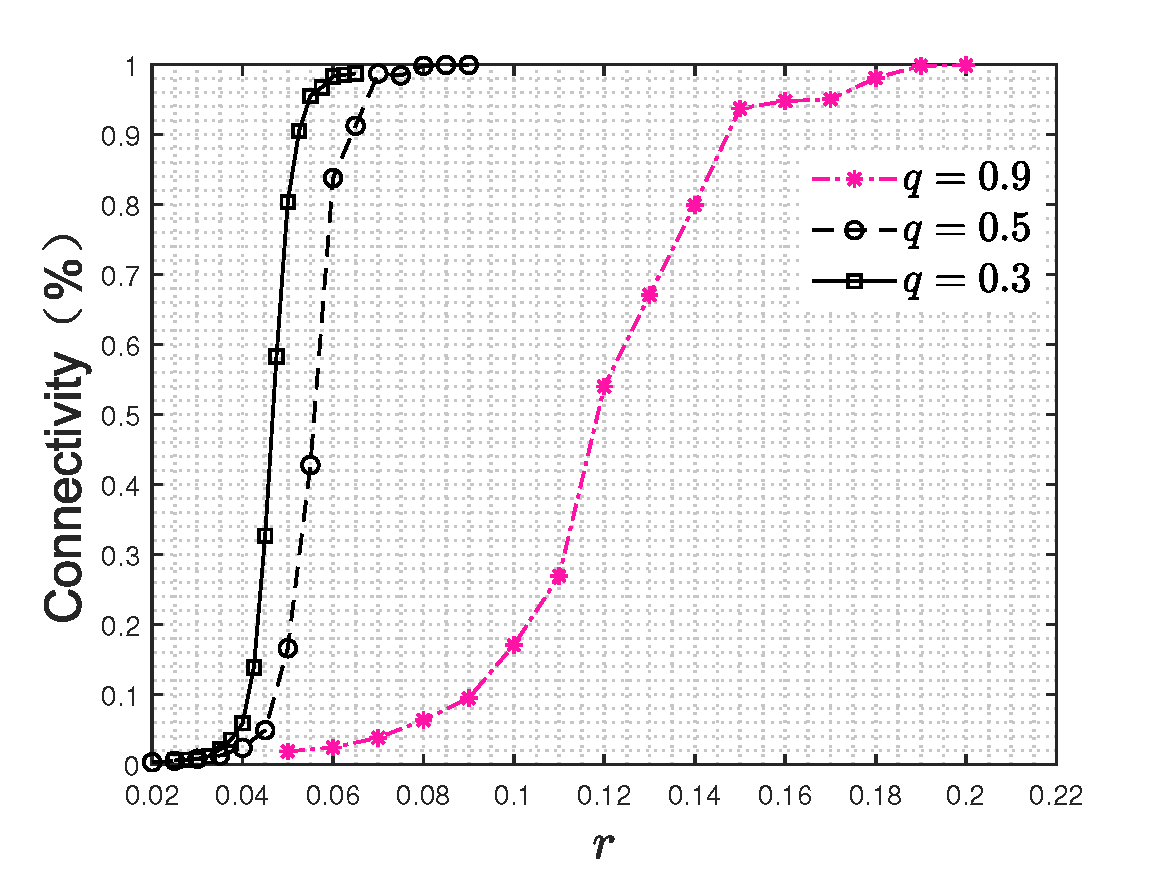}
\caption{The connectivity of communication  topologies varies with the transmission radius under different the  node failure probabilities---$q=.9,.5,.3$  ($n=1000$, each sampling point is simulated 10,000 times)}
\label{Fig2}
\end{figure}

\subsection{Geometric conditions for successful transmission\label{s221}}
Using the protocol model,  we model the single-hop successful transmission conditions for nodes operating with different antenna types \cite{ref18,ref181}. Let $\mathcal{H}=\left\{(i,R(i))\:|\:i\in J\right\}$ denote the set of transmitter-receiver pairs in the same time slot, where $J$ is the set of indices of active nodes  and $R(i)$ is the receiving node corresponding to node $i$.  The size of the set $\mathcal{H}$ is determined by the spatial distribution of nodes and the traffic pattern. 

\begin{definition} A data packet  is successfully transmitted in one hop if either of the following conditions is satisfied.

\textbf{C1: Omnidirectional antenna.} Node $x_{i}$ successfully transmits a data packet to node $x_{R(i)}$ if $\rho(x_{i},x_{R(i)})<r_{q}(n)$ and the distance between  transmitting nodes $x_{k}$ and $x_{R{i}}$ satisfies  

\begin{equation}
\rho(x_{k},x_{R(i)})\geq (1+\triangle)\rho(x_{i},x_{R(i)}), \label{Eq-1}
\end{equation}
where $k\in J-\{i\}$ \footnote{$J - \{x\}$ is defined as the set of all $y$ such that $y$ is an element of $J$ and $y$ is not an element of $\{x\}$, i.e.,
\(J - \{x\} = \{\, y \mid y \in J \text{ and } y \notin \{x\} \,\}.\)} and $\triangle$ is the in-network interference parameter.

\textbf{C2: Directional antenna.} Under the constraint of a minimum beamwidth $\theta$, node $x_{i}$ successfully transmits a data packet to node $x_{R(i)}$ if $\rho(x_{i},x_{R(i)})\leq r_{q}(n)$ and node $x_{R(i)}$ is either outside the beam of simultaneously transmitting nodes $x_{k}$ or equation (\ref{Eq-1}) holds.
\label{DP2}
\end{definition}

\begin{figure}[!t]
\centering
\subfloat[Omnidirectional]{%
    \includegraphics[width=0.45\linewidth]{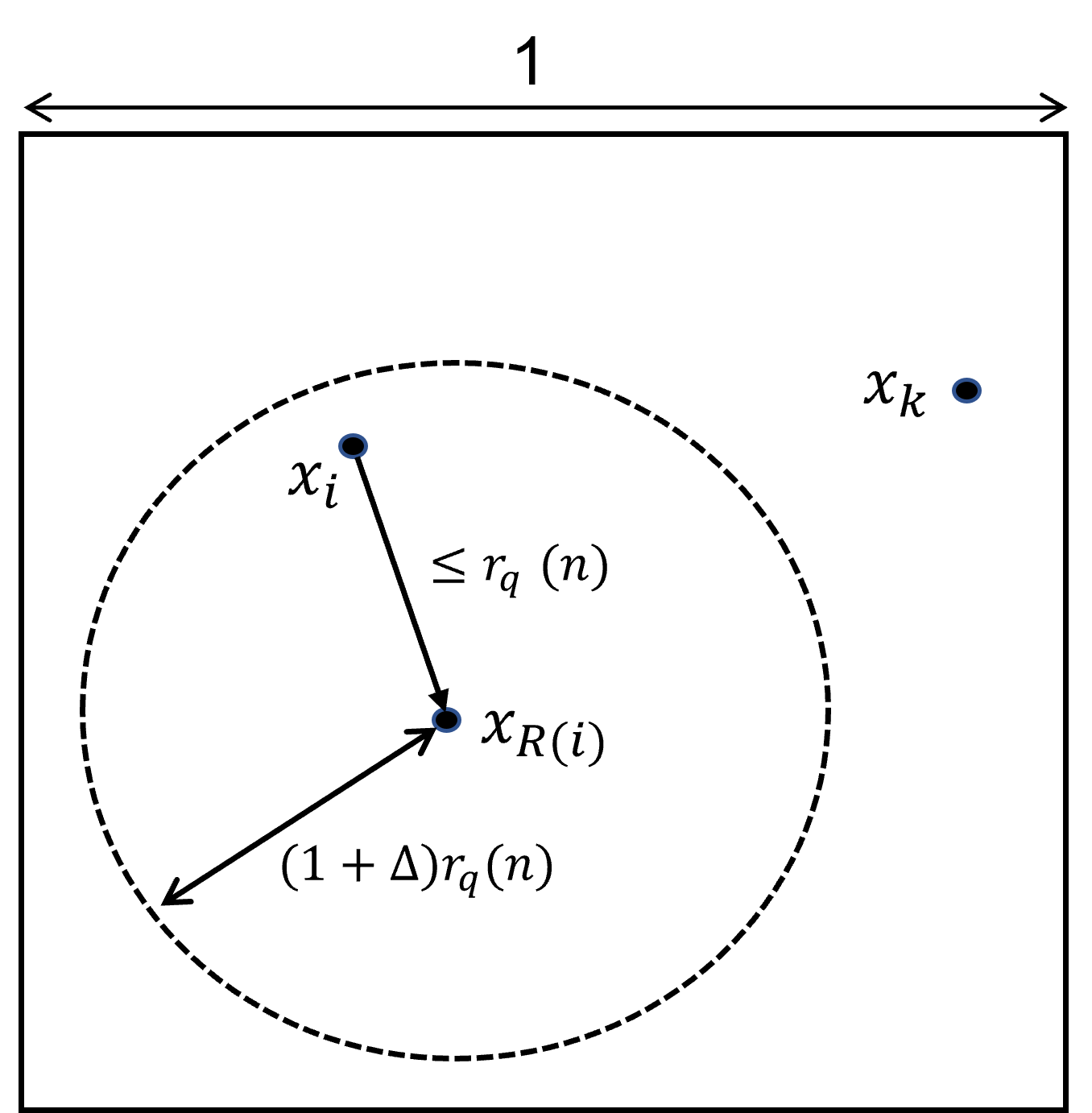}
}
\hfill
\subfloat[Directional]{%
    \includegraphics[width=0.45\linewidth]{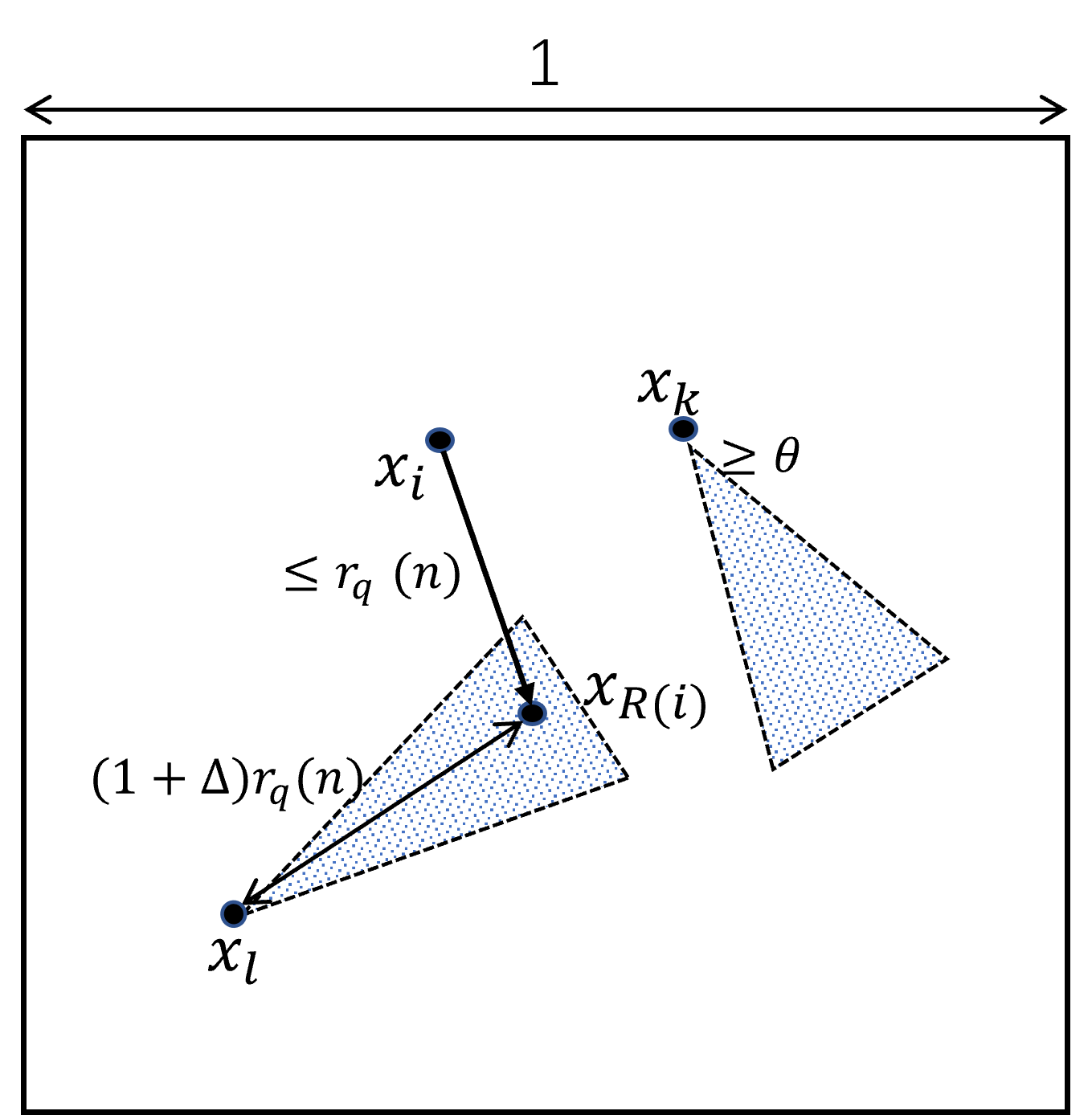}
}
\caption{Geometric conditions of single-hop successful transmission}
\label{Fig3}
\end{figure}

To illustrate the protocol model, the geometric interpretations of  C1 and C2 are given in Fig. \ref{Fig3}. Combining Definition \ref{DP2} and the triangle inequality, we know that, for multiple transmitter-receiver pairs to simultaneously succeed in transmitting data packets, the disks centered at the receiver node $x_{R(i)}$ with radii $\frac{\Delta^{\prime}}{2}\rho(x_{i},x_{R(i)})$ should  not overlap (see Fig. \ref{Fig1}). This geometric condition holds for both omnidirectional and directional antennas \cite{ref181}. For the unified analysis, we use the in-network interference parameter $\Delta^{\prime}$, which is given by  

\begin{eqnarray}
     \Delta^{\prime}=
    \begin{cases}
    \Delta,& \mbox{omnidirectional,}\\\\
    \min\left\{\Delta,\sin \frac{\theta}{2}\right\}, &\mbox{directional.}
    \end{cases}
    \label{eq3}
\end{eqnarray}

According to  (\ref{eq3}),  the impact of different antenna types on the in-network interference parameter 
is a constant and  is independent of the number of nodes. Therefore,  we do not  distinguish the antenna types in  analyses.

In general, the signal-to-interference-plus-noise ratio (SINR) is also used to determine whether a communication link is successfully established between two nodes. This is  called the physical model.
 Specifically, node \( R(i) \) can successfully receive a data packet from node \( i \) within time slot \( t \) if the following condition is satisfied:

\begin{equation}
\frac{P_{i}\ell(x_{i},x_{R(i)})}{N + \sum_{\substack{k \in \mathcal{H} \\ k \neq i}} P_{k} \ell(x_{k},x_{R(i)})} \geq \beta
\label{SINR1}
\end{equation}
where \( N \) denotes the additive noise power level and \( \beta \) is the SINR  threshold required for successful decoding, which depends on the transmitter hardware and the signal strength level rather than being a constant value.  \(\ell(x, y) = \left( \rho(x, y) \right)^{-\alpha}\)  is the path loss function, where  \( \rho(x, y) \geq \rho_0 > 0 \), and \( \alpha \:\:(>2)\) is the path loss exponent.
\begin{figure}[!t]
\centering
\includegraphics[width=5.5cm,height=5.5cm]{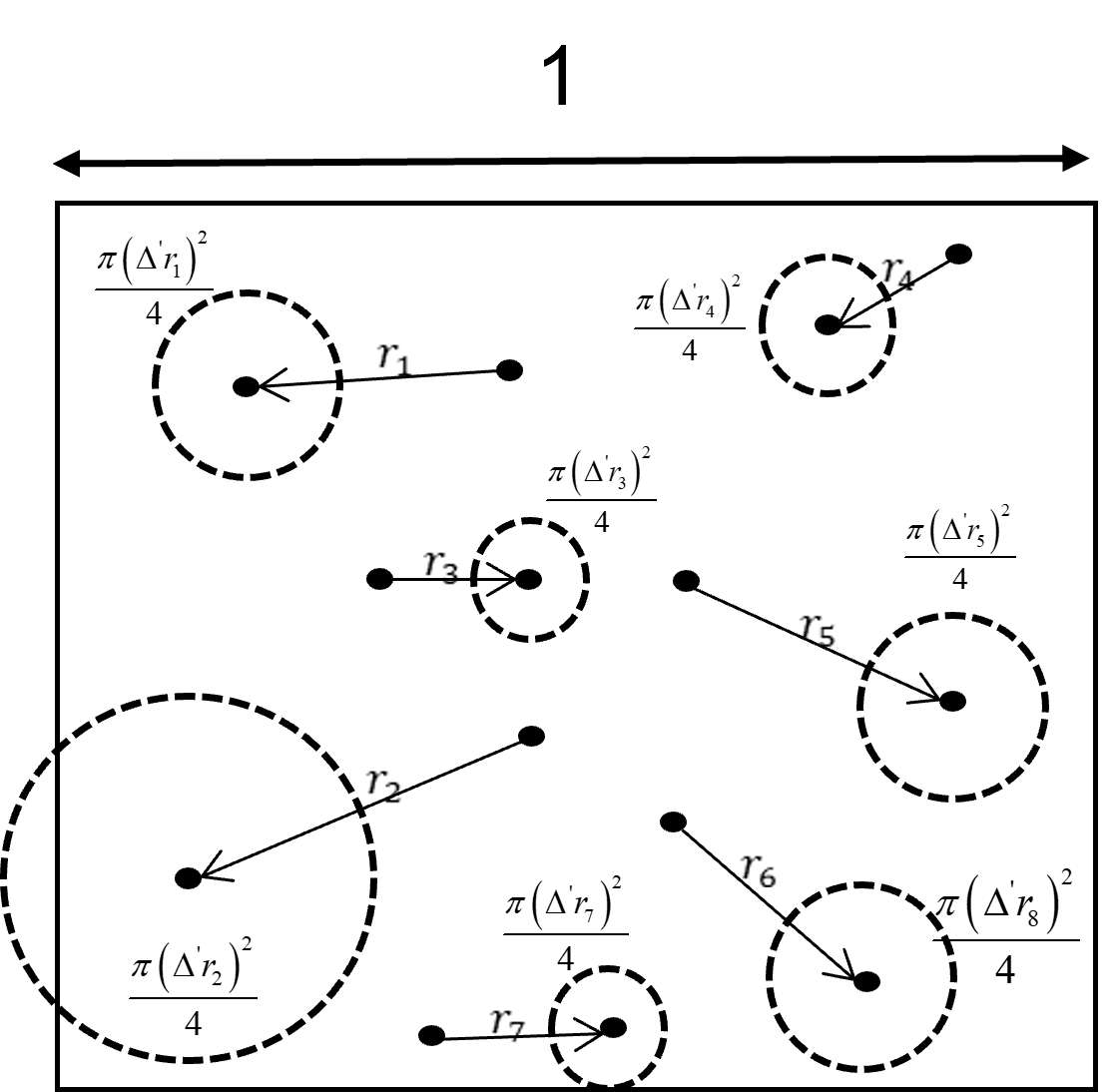}
\caption{ Many nodes successfully transmit data packets simultaneously ($r_{i}\leq r_{q}(n), i=1,\cdot\cdot\cdot,7$)}
\label{Fig1}
\end{figure}

In comparison with Definition~\ref{DP2}, the key parameters in the physical model defined by (\ref{SINR1}) are the demodulation threshold \( \beta \) and the path loss exponent \( \alpha \), whereas those in the protocol model is the in-network interference parameter \( \Delta^{\prime} \).
 These key parameters essentially measure the impact of in-network interference caused by simultaneous transmissions on  links. These two models are equivalent, as shown in Remark \ref{Remarl1017}.

\begin{remark}
According to Theorem 4.1 in~\cite{ref18}, if \( \Delta^{\prime} > \Delta(\beta) := \left(48 \beta \frac{2^{\alpha - 2}}{\alpha - 2}  \right)^{\alpha^{-1}} \), then any set of simultaneous  transmitter-receiver
pairs \( \mathcal{H}\) allowed under the protocol model can also be supported under the physical model with SINR threshold \( \beta \), given a suitable power assignment \( \{P_i, 1 \leq i \leq n\} \). This means that the capacity obtained using the protocol model is robust to both in-network interference and path loss. As the path loss exponent increases, the value of $\Delta(\beta)$ decreases, and the area consumed by successful link establishment becomes smaller.
\label{Remarl1017}
\end{remark}

\subsection{Network protocols\label{S23}}
 The key issues in using multi-hop  transmission scheme are selecting the data paths for transporting data packets and avoiding in-network interference caused by multiple nodes transmitting in the same time slot. The former is the routing problem and the latter is the link scheduling problem. Since we aim to focus on the fundamental performance limit of wireless ad hoc networks, we outline the basic principles of multi-hop  transmission scheme. The standard proof framework in \cite{ref112} is employed to establish the relationship between key parameters and network capacity.

\subsubsection{Mapping relationship between site percolation graph and random geometric graph}
To extract the critical parameters ensuring the network  topology is connected with probability 1, we tile the unit square region with small squares of side length $a_{q}(n)\triangleq \frac{r_{q}(n)}{\sqrt 2}$, as shown in Fig. \ref{Fig5}. On one hand, this method ensures that each small square contains at least one node, which indicates that  each small square has the capability to relay data packets from other nodes. On the other hand, it transforms a random geometric graph formed by non-faulty nodes into a site percolation graph, where  each vertex has at most four neighbors.

Each small square can be equivalently considered as a vertex in the site percolation graph.  if each of two adjacent small squares contains at least one non-faulty node,  there exists an edge between the corresponding vertices. The failure probability  for non-empty small squares is $q^{\prime}$, which is determined by the number of non-faulty nodes  within each small square.   When $q^{\prime}$ is less than the critical probability $q_{c}$, the connectivity of the site percolation graph increases sharply to one with probability 1 (forming a  unique connected component). Conversely, the site percolation graph is disconnected with probability 1 (forming multiple connected components). In discrete percolation theory, this sharp change in connectivity is known as  the phase transition and its strict definition is given as follows \cite{ref20}.

\begin{definition}The probability that the connectivity of the site percolation graph approaches one is denoted as $\psi(q^{\prime})$. The condition is given by
\begin{eqnarray}
     \psi(q^{\prime})=
    \begin{cases}
    1,& q^{\prime}\leq q_{c},\\
    0, &q^{\prime}>q_{c}.
    \end{cases}.
    \label{eq4}
\end{eqnarray}
where $q_{c}$ is the critical failure probability\footnote{Through simulation in \cite{ref20}, it is found that $q_{c}\approx 0.4073.$}.
\label{D2}
\end{definition}

 \begin{figure}[]
\centering
\includegraphics[width=8cm,height=4.5cm]{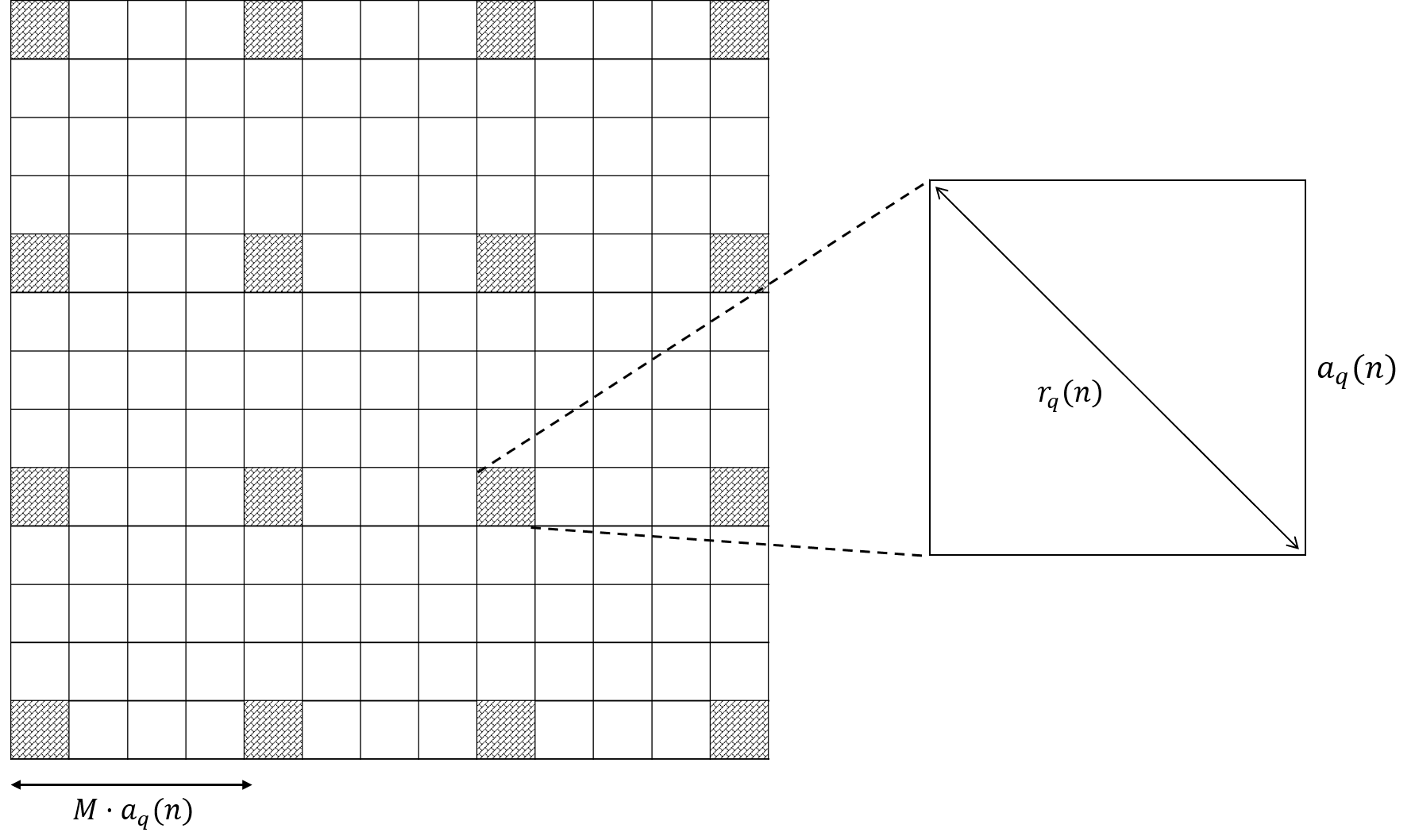}
\caption{The division mode of unit squares}
\label{Fig5}
\end{figure}

\subsubsection{Interference-free scheduling} To enhance spatial reuse and minimize in-network interference, interference-free  scheduling  is employed. We divide all small squares into clusters, where each cluster contains $M^{2}$ small squares, as shown in Fig. \ref{Fig5}. Small squares within each cluster are numbered from left to right and from top to bottom.  During the scheduling process, all small squares within each cluster that share the same number are activated simultaneously, and then the non-faulty nodes within each small square transmit data packets in sequence. Interference-free scheduling  ensures that multiple non-faulty nodes can transmit successfully in the same time slot, thereby increasing spatial reuse. 

Combining the protocol model  in Definition \ref{DP2}, $M$  is given by 

\begin{equation}
\begin{aligned}
M &\geq \left\lceil 1+\frac{r_{q}(n)+(1+\Delta^{\prime})r_{q}(n)}{r_{q}(n)/\sqrt{2}} \right\rceil \\
  &= \left\lceil 1+\sqrt{2}(2+\Delta^{\prime})\right\rceil \geq 3,
\end{aligned}
\label{M7}
\end{equation}
where $\lceil x \rceil$ denotes the smallest integer greater than $x$. From equation (\ref{M7}), it is clear that  $M$ is determined only on $\Delta^{\prime}$.  From Remark \ref{Remarl1017}, the value of $\Delta^{\prime}$ is determined solely by $\beta$ and $\alpha$, independent of the network size.  Therefore, the value of $\Delta^{\prime}$  does not affect the order of network capacity and delay.

\subsubsection{Multi-hop  transmission strategy}
 A multi-hop  transmission strategy is used to transport data packets between a source node and a destination node. For each source-destination pair $i$, the corresponding S-D line is constructed by connecting the source node $S_{i}$ and the destination node $R_{i}$. Subsequently, data packets generated by the source node  $S_{i}$ are relayed through the small squares along the S-D line to the destination node $R_{i}$, as shown in Fig. \ref{Fig7}(a). If the relaying process encounters an empty small square, a rerouting strategy is employed, as shown in Fig. \ref{Fig7}(b). There are two cases that can lead to a small square being empty: i) it contains no nodes at all; ii) it contains  nodes, but failures occur for all the nodes. Assume that the coding and queue scheduling algorithms  at each hop are optimal and independent of the number of nodes.

\subsection{Definitions of Network Capacity and Delay}

\begin{figure}[]
\centering
\subfloat[Routing strategy]{%
    \includegraphics[width=0.45\linewidth,height=3.5cm]{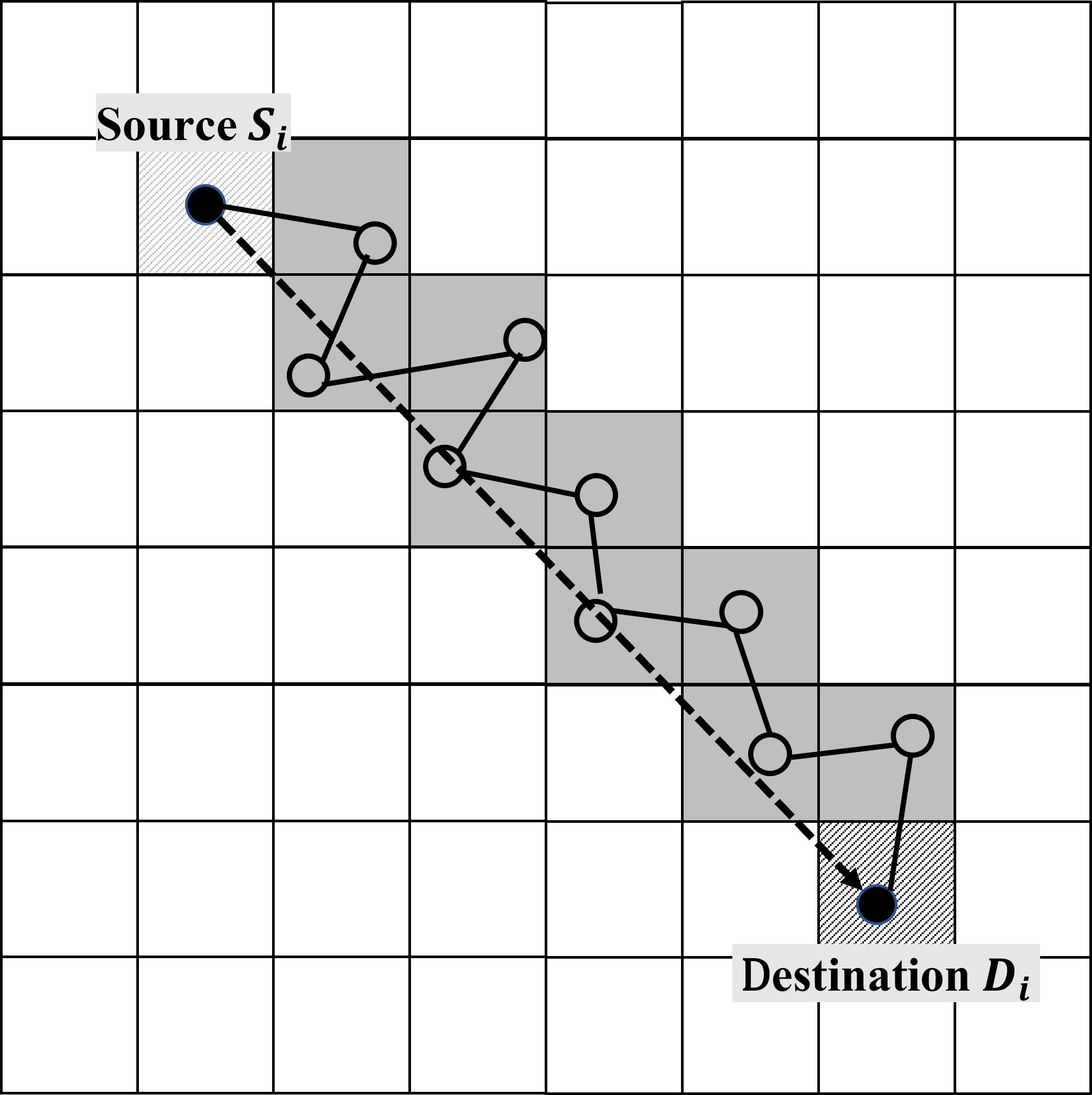}
}
\hfill
\subfloat[Rerouting strategy]{%
    \includegraphics[width=0.45\linewidth,height=3.5cm]{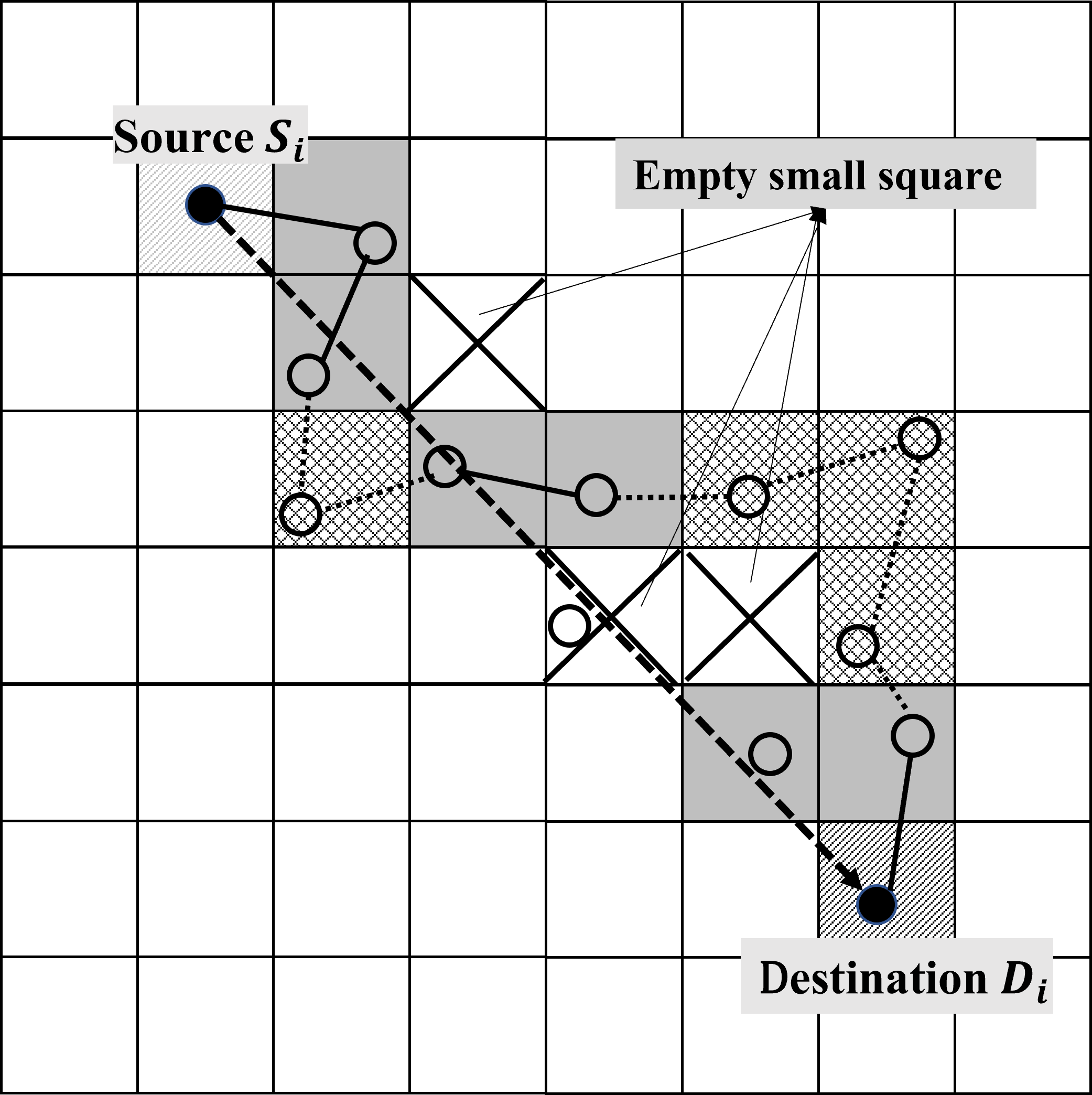}
}
\caption{Multi-hop transmission strategy}
\label{Fig7}
\end{figure}

\begin{definition} Network capacity  is defined as 
$$S_{q} (n)=N_{q}(n)\lambda_{q}(n),$$
where $N_{q}(n)$ is the number of source-destination pairs.  The feasible data transmission rate $\lambda_{q}(n)$ is  defined as follows \cite{ref13}.

$$\lambda_{q}(n)\triangleq \underset{1\leq i\leq N_{q}(n)}{\min}\lim\limits_{t\to\infty}\inf \frac{B(i,t)}{t}\:\: \text{bits/s},$$ 
 where $B(i,t)$ is the cumulative number of bits successfully transmitted for  source-destination pair $i$. $q$ is the node failure probability.

\label{D3}
\end{definition}

\begin{definition} Delay is defined as

$$D_{q}(n)=\frac{1}{N_{q}(n)}\sum_{i=1}^{N_{q}(n)}E\left[\overline{D}^{i}(n)\right],$$
where $E\left[\overline{D}^{i}(n)\right]$,    the average delay for transmitting data packets for source-destination pair $i$,   is defined as follows \cite{ref19}.
$$E\left[\overline{D}^{i}(n)\right]\triangleq E\left[\lim\limits_{k\to\infty}\sup \frac{1}{k}\sum_{j=1}^{k}D_{j}^{i}(n)\right],$$ 
where $D_{j}^{i}(n)$  is the total time required for the the $j$-th data packet of source-destination pair $i$ to travel from the source node to the destination node. $q$ is the node failure probability.

\label{D4}
\end{definition}

\section{Network capacity and delay \label{Set3}}
Before analyzing network capacity and delay, we first give the critical transmission radius and  the number of non-faulty nodes in each small square.
\begin{lemma}When the node failure probability is $q$, the critical transmission radius that ensures the network topology remains connected with probability 1 is given by \scalebox{0.75}{$r_{q}(n)=\sqrt{\frac{\log {n}+\xi}{ (1-q)n}}$},  where  $\xi >0$. When node failure is not considered, the critical transmission radius is \scalebox{0.75}{$r(n)=\sqrt{\frac{\log {n}+\xi^{\prime}}{ n}}$}, where $\xi^{\prime}>0$.

\label{LL1}
\end{lemma}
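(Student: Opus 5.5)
The plan is to reduce the statement to the classical critical-radius result for random geometric graphs: for $m$ uniform points in the unit square, $\pi r^{2} m=\log m+c_{m}$ gives connectivity with probability tending to one iff $c_{m}\to+\infty$ (Penrose; Gupta--Kumar). The key observation is that random failures are an independent thinning. Each node fails independently with probability $q$, independently of its position. Hence, conditioned on $Card(\chi)=m$, the non-faulty nodes are i.i.d.\ uniform on $\mathcal{A}$. Equivalently, in the Poissonized model they form a Poisson process of intensity $(1-q)n$.

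\textbf{Step 1: no isolated non-faulty node.} Condition a) of the connectivity definition concerns only this thinned process. For a fixed non-faulty node, the probability that no other non-faulty node lies within distance $r_{q}(n)$ is about $\left(1-\pi r_{q}^{2}(n)\right)^{(1-q)n}\approx e^{-\pi r_{q}^{2}(1-q)n}$. Summing over the $(1-q)n$ non-faulty nodes gives an expected number of isolated non-faulty nodes of roughly
\begin{equation*}
(1-q)n\,e^{-\pi r_{q}^{2}(n)(1-q)n}.
\end{equation*}
With $r_{q}(n)^{2}=\frac{\log n+\xi}{(1-q)n}$, the exponent becomes $-(\log n+\xi)$ up to the constant $\pi$, which is absorbed into $\xi$. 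The expectation is therefore $(1-q)e^{-\xi}$, which tends to $0$ as $\xi\to\infty$ slowly. By Markov's inequality there are then no isolated non-faulty nodes with high probability. Boundary effects near the edges of $\mathcal{A}$ only change constants, and I would absorb them into $\xi$ in the same way.

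\textbf{Step 2: no isolated nodes implies connectivity.} Next I would invoke Penrose's theorem, which says that in two-dimensional random geometric graphs the connectivity threshold coincides asymptotically with the threshold for having no isolated nodes. This gives condition a).

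\textbf{Step 3: condition b).} For a faulty node $x_{t}$, the probability that $N^{+}(x_{t})=\emptyset$ is the same quantity $e^{-\pi r_{q}^{2}(1-q)n}$. Summing over the $qn$ faulty nodes gives expectation $q e^{-\xi}\to0$, so condition b) holds with high probability.

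\textbf{Step 4: optimality and the failure-free case.} For the matching lower bound, that a smaller radius fails, I would use the second-moment method on the count of isolated nodes: if $\pi r^{2}(1-q)n=\log n-\omega(1)$, the count diverges and concentrates. Setting $q=0$ recovers $r(n)$ with $\xi'$.

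The main obstacle is Step 2. The count of non-faulty nodes is random, namely $\mathrm{Bin}(n,1-q)$, and $\log((1-q)n)$ differs from $\log n$. I would handle this by noting that $\log((1-q)n)=\log n+\log(1-q)$, so for fixed $q$ the difference is a constant shift absorbed into $\xi$. I would also note that $\mathrm{Bin}(n,1-q)$ concentrates within $O(\sqrt{n})$ of $(1-q)n$. With these two facts, the classical theorem applies to the thinned process after conditioning on its size.
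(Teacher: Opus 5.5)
You prove the lemma yourself, whereas the paper's proof is a one-line deferral to \cite{ref16}, a result on networks with unreliable (Bernoulli) nodes. Your route is sound. Failures are an independent thinning, so given $Card(\chi)=m$ the non-faulty nodes are i.i.d.\ uniform. A first-moment count then handles isolated non-faulty nodes (mean about $(1-q)e^{-\xi}$) and faulty nodes with no non-faulty neighbour (mean about $qe^{-\xi}$). Penrose's theorem upgrades ``no isolated non-faulty node'' to condition a), and a second-moment bound gives minimality.

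Applying Penrose conditionally on $m$ is legitimate. On the high-probability event that the longest minimal-spanning-tree edge equals the largest nearest-neighbour distance, connectivity and absence of isolated vertices coincide for every radius simultaneously, and $m\to\infty$ with high probability. Your route shows where each condition of Definition~1 enters. It also shows why, once condition b) is included, the combined mean is $(1-q)e^{-\xi}+qe^{-\xi}=e^{-\xi}$, making $\log n$ rather than $\log(n(1-q))$ the natural centering. What the citation adds is the Poisson limit of that count at fixed $\xi$, which a first-moment argument cannot give.

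Two normalization points need repair.

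\textbf{The factor $\pi$.} It cannot be ``absorbed into $\xi$''. With the radius exactly as stated, $\pi r_q^2(n)(1-q)n=\pi\log n+\pi\xi$, so $\pi$ multiplies $\log n$, not an additive constant. Your computation is really for $r_q^2(n)=\frac{\log n+\xi}{\pi(1-q)n}$, which is also the form the paper substitutes in the proof of Theorem~1. With the literal formula, your Step~1 gives an expected count of order $n^{1-\pi}e^{-\pi\xi}\to 0$ even for fixed $\xi$, so connectivity holds. However, your Steps~1--4 together locate the threshold at $\sqrt{(\log n+\xi_n)/(\pi(1-q)n)}$, a factor $\sqrt{\pi}$ below the stated radius. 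The stated radius is therefore sufficient but not minimal, and you should say which version you prove.

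\textbf{Fixed $\xi$.} With the $\pi$ in place, a fixed $\xi>0$ does not give probability one. The mean $e^{-\xi}$ stays bounded away from $0$, the count is asymptotically Poisson, and the limiting probability of connectivity is $e^{-e^{-\xi}}<1$. Your choice $\xi=\xi_n\to\infty$ is the correct repair. Make explicit that you are proving this corrected statement, not the lemma with a fixed constant $\xi$.
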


\begin{proof}
This proof is provided in \cite{ref16}.
\end{proof}
\begin{remark}In real communication scenarios, wireless channel conditions such as fading, shadowing, and path loss affect instantaneous link quality. However, prior studies (e.g., \cite{channel,channe2}) show that wireless network connectivity is generally not sensitive to these factors. Specifically, shadowing can enhance connectivity by increasing the likelihood of long-range links, whereas fading and path loss can slightly degrade it. These variations do not impact the scaling behavior of the critical transmission radius, as established in Lemma \ref{LL1}. Consequently, the capacity scaling laws derived in the subsequent analysis remain robust to the random fluctuations of wireless channel. Some simulation results on the impact of channel randomness on network structural connectivity are provided in Appendix \ref{AppendixA1}.
\end{remark}

 According to Definition \ref{D2},  if the failure probability of a non-empty small square  is below the critical failure probability, the site  percolation graph is connected with probability 1. Otherwise, it is disconnected with probability 1.  Whether a non-empty small square fails is determined by the number of non-faulty nodes it contains and the node failure probability. In Proposition \ref{P1}, we provide the number of non-faulty nodes in each small square.

\begin{proposition}
For $a_{q}(n)\geq \frac{r_{q} (n)}{\sqrt{2}}$,  the number of non-faulty nodes in each small square is $\Theta(\log (n))$.
\label{P1}
\end{proposition}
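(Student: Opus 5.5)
The plan is to count non-faulty nodes square by square using a binomial model, and then turn the mean into a uniform bound over all squares with Chernoff inequalities and a union bound.

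\textbf{Step 1: the mean.} Take squares of side $a_q(n)=r_q(n)/\sqrt{2}$; larger squares only increase the count. Each of the $n$ nodes independently falls in a given square with probability $a_q^2(n)$. Independently of its position, each node is non-faulty with probability $1-q$. So the number $Y_k$ of non-faulty nodes in square $k$ is $\mathrm{Bin}(n,(1-q)a_q^2(n))$. By Lemma \ref{LL1},
\begin{equation*}
E[Y_k]=n(1-q)\frac{r_q^2(n)}{2}=\frac{\log n+\xi}{2}=\Theta(\log n).
\end{equation*}
The factor $(1-q)$ in the denominator of $r_q^2(n)$ exactly cancels the thinning by $1-q$. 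Hence the mean is independent of $q$, which is the point of the proposition. The number of squares is $1/a_q^2(n)=\frac{2(1-q)n}{\log n+\xi}$.

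\textbf{Step 2: upper bound.} I would apply the Chernoff bound $P(Y_k\ge c\mu)\le (e/c)^{c\mu}$ with $\mu=E[Y_k]$. For a large enough constant $c$ this is at most $n^{-2}$. A union bound over the at most $n$ squares then gives $\max_k Y_k\le c\log n$ with probability tending to one.

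\textbf{Step 3: lower bound (main obstacle).} Since $\mu=\tfrac12\log n+O(1)$, the lower-tail bound $P(Y_k\le\delta\mu)\le \exp(-\mu(1-\delta+\delta\log\delta))$ behaves like $n^{-(1-\delta+\delta\log\delta)/2}$. The exponent is at most $1/2$, while the number of squares is of order $n/\log n$. So the naive union bound fails, and Step 3 does not follow from the mean alone.

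I would try two fixes, in this order.

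First, fall back on the connectivity guaranteed by Lemma \ref{LL1} and the percolation mapping. Every square then contains at least one non-faulty node with high probability. Combined with Step 2, this gives $O(\log n)$ per square and $\Theta(\log n)$ on average.

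Second, for a genuine per-square lower bound, I would enlarge the tiling constant. Using side $a_q(n)=\sqrt{c_0}\,r_q(n)$ with $c_0$ large enough that $c_0(1-\delta+\delta\log\delta)>1$, the union bound closes. The hypothesis $a_q(n)\ge r_q(n)/\sqrt{2}$ permits this choice, and it changes only constants, so it does not affect $M$ in (\ref{M7}) up to order.

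Either route yields $\Theta(\log n)$ non-faulty nodes per square with probability tending to one, as claimed in Proposition \ref{P1}.
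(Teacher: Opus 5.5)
Your second route is correct and is essentially the paper's proof: a binomial count per square, Chernoff upper and lower tails, and a union bound over the $m$ squares. The obstacle you flag is exactly where the paper is silent. At the tiling $a_q(n)=r_q(n)/\sqrt2$ the mean is $\frac12(\log n+\xi)$, as you computed. The paper's $c_0=\frac{1}{\pi(1-q)}$ does not even cancel the factor $1-q$ as it should. As a result the paper's bound $m\,n^{-c_0\sigma^2/2}$ does not tend to zero, and the paper closes it only by declaring ``let $c_0>2/\sigma^2>1$''. That is precisely your enlargement of the square side by a constant factor.

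Your upper tail, using a large deviation factor $c$, is also cleaner than the paper's. The paper restricts to $0<\sigma<1$ and so needs $c_0>(1+\sigma)/\sigma^2>2$ there as well. Your thinning model $\mathrm{Bin}(n,(1-q)a_q^2(n))$ is also more faithful than the paper's $\mathrm{Bin}(n(1-q),a_q^2(n))$, which treats the number of non-faulty nodes as deterministic.

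Drop the first fix, though. Its premise is false, and it would not suffice even if true. Connectivity of $G(n,r_q(n))$ does not force every square of side $r_q(n)/\sqrt2$ to be occupied: nodes just outside an empty square, at distance about $a_q(n)<r_q(n)$, can bridge it. Your own numbers refute occupancy. With $p=(1-q)a_q^2(n)$ you have $\Pr\{Y_k=0\}\approx e^{-\mu}=e^{-\xi/2}n^{-1/2}$, so the expected number of empty squares is about $\frac{2(1-q)e^{-\xi/2}\sqrt n}{\log n+\xi}\to\infty$. The empty-square indicators are negatively correlated, since $(1-2p)^n\le(1-p)^{2n}$, so empty squares exist with high probability. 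The paper's assertion that its tiling ensures every small square contains a node fails for the same reason. Even granting occupancy, you would only get $Y_k\ge 1$, not $\Omega(\log n)$, and an average count of $\Theta(\log n)$ is trivial. So ``either route'' should read ``the second route''.

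Two smaller points. First, the proposition can hold only for $a_q(n)=\Theta(r_q(n))$ with a large enough constant, because for $a_q(n)\gg r_q(n)$ the $O(\log n)$ bound fails; ``larger squares only increase the count'' helps only the lower bound. Second, once the side exceeds $r_q(n)$, nodes in adjacent squares can be $\sqrt{5c_0}\,r_q(n)$ apart. The transmission radius, not only $M$, must then be inflated by a constant, which leaves all orders unchanged.
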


\begin{proof}The detailed proof can be found in Appendix  \ref{AppendixA}.
\end{proof}

Combining Proposition \ref{P1} and the conditions in Definition \ref{D2}, the condition for ensuring the site percolation graph is connected with probability 1 is
\begin{equation}
\begin{aligned}
q' = q^{c_{1}\log n} < q_c \Rightarrow q < q_c^{\frac{1}{c_{1}\log n}} 
\approx 0.4073^{\frac{1}{c_{1}\log n}}, 
\end{aligned}
\label{eq15}
\end{equation}
where  $c_{1} > 0$.

According to (\ref{eq15}), when the number of nodes is fixed, the node failure probability should be less than a certain critical value to ensure the site percolation graph is connected with probability 1.  This indicates that increasing the number of nodes can enhance the resilience of the network  topology. In the subsequent analysis, assume that condition (\ref{eq15}) and Proposition \ref{P1} both hold, meaning the site percolation graph is connected with probability 1.

\subsection{Network capacity\label{Se3_1}}

In this section, we will present the network capacity of wireless ad hoc networks with node failures.

\begin{theorem}The network capacity of wireless ad hoc networks   is 
$$S_{q}(n)=\Theta\left( \sqrt{\frac{n(1-q)}{\log n}}\right)\:\:\text{bits/s},$$
where $q$ is the node failure probability. As \( p \to 0 \), this result reduces to $\Theta\left(\sqrt{\frac{n}{\log {n}}}\right)\:\:\text{bits/s}$.
\label{Th1}
\end{theorem}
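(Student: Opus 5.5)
We will prove the theorem in the Gupta--Kumar style: a matching upper and lower bound on the per-pair rate $\lambda_q(n)$, multiplied by $N_q(n)=\Theta(n(1-q))$. Throughout we use the critical radius $r_q(n)=\sqrt{(\log n+\xi)/((1-q)n)}$ from Lemma \ref{LL1} and the tiling by squares of side $a_q(n)=r_q(n)/\sqrt2$. We assume condition (\ref{eq15}), so the site percolation graph is connected with probability 1. By Proposition \ref{P1}, each small square then holds $\Theta(\log n)$ non-faulty nodes.

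\textbf{Upper bound.} We use a transport-capacity (area) argument restricted to the $(1-q)n$ non-faulty nodes.
\begin{itemize}
\item Each of the $N_q(n)$ S--D pairs has endpoints independently uniform in $\mathcal A$, so its mean S--D distance is a constant $\bar L$. By the law of large numbers, the total bit-meters per second demanded is at least $c\,N_q(n)\lambda_q(n)\bar L$.
\item By the protocol model (Definition \ref{DP2}), receivers active in the same slot carry disjoint disks of radius $\frac{\Delta'}{2}\rho(x_i,x_{R(i)})$. Each hop length is at most $r_q(n)$.
\item Summing areas and applying Cauchy--Schwarz bounds the bit-meters delivered per slot by $O\big(W\sqrt{(1-q)n}\big)$.
\item Connectivity forces hops of length $\Theta(r_q(n))$, and each bit needs at least $\bar L/r_q(n)$ hops. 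The number of simultaneous transmissions is at most $O(1/r_q(n)^2)$ by the disk-packing constraint.
\item Hence $N_q(n)\lambda_q(n)\,\bar L/r_q(n)\le O(W/r_q(n)^2)$, which gives $S_q(n)=O(W/r_q(n))$. With the factor $\Theta(1)$ in the traffic normalisation this yields $S_q(n)=O\big(\sqrt{n(1-q)/\log n}\big)$.
\end{itemize}

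\textbf{Lower bound.} We use the constructive scheme of Section \ref{S23}.
\begin{itemize}
\item With the $M^2$-cluster TDMA of (\ref{M7}), each square is active once every $M^2=O(1)$ slots. Within an active square, its $\Theta(\log n)$ nodes share the slot, so each square relays at rate $\Theta(W)$.
\item The relaying load per square is the number of S--D lines crossing it. A line of length $\le\sqrt2$ intersects $O(1/a_q(n))$ squares, so a square is hit by a given line with probability $O(a_q(n))$.
\item The number of lines crossing any square is therefore $O(N_q(n)a_q(n))=O(\sqrt{n(1-q)\log n})$. A Chernoff/VC-type uniform convergence bound over the $1/a_q(n)^2$ squares makes this hold for every square simultaneously, with probability $\to1$.
\item Rerouting around empty or failed squares (Fig. \ref{Fig7}(b)) adds only $O(1)$ detour squares per blocked square. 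In the supercritical percolation regime the detour inflation is $O(1)$ w.h.p., so loads grow by at most a constant factor.
\end{itemize}
Each flow thus gets $\lambda_q(n)=\Omega\big(1/(N_q(n)a_q(n))\big)$. Multiplying by $N_q(n)$ gives $S_q(n)=\Omega(1/a_q(n))=\Omega\big(\sqrt{n(1-q)/\log n}\big)$. Setting $q=0$ recovers $r(n)$ from Lemma \ref{LL1} and the classical $\Theta(\sqrt{n/\log n})$.

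\textbf{Main obstacle.} The hardest step is controlling the rerouting overhead uniformly. We must show that detours around failed squares add only a constant factor of load per square w.h.p., and do not concentrate traffic on a few squares. We would first try a chemical-distance bound for supercritical site percolation: graph distance is at most a constant times the Euclidean distance in squares. We would combine it with the fact that, under (\ref{eq15}), $q'=q^{c_1\log n}\to0$, so blocked squares are very sparse and independent. A union bound over the $O(1/a_q(n)^2)$ squares should then show that no square lies on more than $O(1)$ detours of neighbouring blocked squares.
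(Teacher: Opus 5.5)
Your architecture matches the paper's. The upper bound combines ``each bit needs at least $\bar L/r_q(n)$ hops'' with ``at most $O(1/r_q(n)^2)$ simultaneous transmissions''. The lower bound uses the $M^2$-cluster TDMA with an $O(N_q(n)a_q(n))$ bound on S--D lines per square, via Chernoff plus a union bound.

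\textbf{Gap in the upper bound.} The step ``Connectivity forces hops of length $\Theta(r_q(n))$'' is false. Connectivity only requires the range to be at least $r_q(n)$. It says nothing about the hop lengths a scheme actually uses. A scheme may relay along much shorter links; typical nearest-neighbour distances among $(1-q)n$ uniform points are $\Theta(1/\sqrt{(1-q)n})$.

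Under Definition~\ref{DP2}, as you invoke it, the exclusion disk has radius $\frac{\Delta'}{2}\rho(x_i,x_{R(i)})$. This is proportional to the \emph{actual} hop length, so short hops can pack far more than $O(1/r_q(n)^2)$ simultaneous transmissions. What disk packing really gives is $\sum_i\rho(x_i,x_{R(i)})^2=O(1)$ per slot. That is your Cauchy--Schwarz bullet: $O(W\sqrt{(1-q)n})$ bit-meters per slot, hence only $S_q(n)=O(\sqrt{n(1-q)})$, missing the $\sqrt{\log n}$.

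Your two ingredients also pull in opposite directions. The hop-count bound uses hops $\le r_q(n)$, while the packing bound needs hops $\gtrsim r_q(n)$. They combine only if a transmission's interference footprint is decoupled from its hop length.

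\textbf{The fix.} That decoupling is exactly what the paper assumes. All nodes use the common radius $r_q(n)$, and every successful transmission consumes area at least $\pi(\Delta' r_q(n))^2/4$ wherever its receiver sits. In effect this is Gupta--Kumar's random-network protocol model, with guard zone $(1+\Delta)r$ rather than $(1+\Delta)\rho$. Replace the connectivity claim by this modelling assumption. Then your last two bullets go through verbatim, and the Cauchy--Schwarz bullet can be dropped.

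\textbf{Lower bound.} Here you match the paper, but your ``main obstacle'' is moot on your own premises. You invoke Proposition~\ref{P1}, which puts $\Theta(\log n)\ge 1$ non-faulty nodes in \emph{every} square. So no square is empty and no detours arise; the paper's proof does no rerouting analysis at all. Your detour argument is, in any case, only sketched.

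You may not want to rely on Proposition~\ref{P1} at side exactly $r_q(n)/\sqrt2$, and there is reason not to. The expected number of non-faulty nodes per square is then only $\frac12(\log n+\xi)$, so in expectation $\Theta((1-q)\sqrt n/\log n)$ squares are empty. The standard fix is to enlarge the cell side by a constant factor, so that expected occupancy is $c\log n$ with $c>1$. A union bound then makes all cells occupied w.h.p., the range grows only by a constant factor, and no chemical-distance or percolation argument is needed.
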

\begin{proof}
We are going to provide the proofs for the upper and lower bounds of  network capacity, respectively.

\ding{192}  Lower bound. 

The lower bound on network capacity ensures there exist interference-free scheduling and routing strategies enabling the network to achieve at least the given capacity.  According to Section \ref{S23}, specifically, interference-free scheduling  ensures that each small square within a cluster can only be activated once within $M^{2}$  time slots. This means that fewer small squares are activated in the same time slot when $M$ is larger, resulting in less interference. The degradation in network capacity due to interference-free scheduling is $\Theta\left(\frac{W}{M^{2}}\right)$.  During the multi-hop routing process,  nodes within each small square not only transmit and receive data packets corresponding to their own source nodes but also relay data packets from other source-destination pairs. Therefore, the total number of data packets forwarded  is proportional to the total number of S-D lines crossing the small square. This essentially limits the lower bound of the achievable network capacity, i.e., there exists a feasible multi-hop strategy that can achieve the corresponding lower bound of the network capacity.

Firstly, we provide the number of S-D lines served by each small square. Let $H_{i}$ be the total number of hops required from the source to the destination for pair $i$, $i=1,\cdot\cdot\cdot, N_{q}(n)$.   Since nodes are uniformly distributed, we have

 \begin{equation}
    E[H_{i}]=\Theta \left(\frac{E[L_{i}]}{a_{q}(n)}\right),
    \label{P21}
\end{equation}
where $E[L_{i}]$ be the distance between the source node and destination node of pair $i$.

 Let $Y_{i}^{j}$  be the indicator variable for whether the S-D line of source-destination pair $i$ passes through small square $j$. 
 $Y_{i}^{j}=1$  indicates that the S-D line of source-destination pair $i$ passes through small square $j$ and $ Y_{i}^{j}=0$  indicates that the S-D line of source-destination pair $i$ does not passes through small square $j$. 

Separately summing $Y_{i}^{j}$ and $H_{i}$ over the indices $i$ and $j$, we have
\begin{equation}
    \sum_{i=1}^{N_{q}(n)}\sum_{j=1}^{m}Y_{i}^{j}=\sum_{i=1}^{N_{q}(n)}H_{i},
    \label{P23}
\end{equation}
where $N_{q}(n)=\frac{n(1-q)}{2}$.

Taking the expectation of both sides of (\ref{P23}), we obtain
\begin{equation}
m N_{q}(n) E\left[ Y_{i}^{j}\right]=N_{q}(n)E\left[ H_{i}\right]\iff m E\left[ Y_{i}^{j}\right]=E\left[ H_{i}\right].
    \label{P24}
\end{equation}

Combining (\ref{P21}) and (\ref{P24}), we have
\begin{equation}
    \text{Pr}\{Y_{i}^{j}=1\}=\Theta(r_{q}(n)).
    \label{P25}
\end{equation}

According to  (\ref{P25}),  the  number of S-D lines passing through the small square $j$  is 
\begin{equation}
    Y^{j}=\sum_{i=1}^{N_{q}(n)}Y_{i}^{j}.
    \label{P26}
\end{equation}

From  (\ref{P26}),  the expectation of $Y^{j}$ is given by
\begin{equation}
    E[Y^{j}]=\Theta\left(N_{q}(n)r_{q}(n)\right)=\Theta\left(\sqrt{n(1-q)\log n}\right).
    \label{P27}
\end{equation}

By applying Chebyshev's inequality, we get
\begin{equation}
    \text{Pr}\{Y^{j}>(1+\sigma)E[Y^{j}]\}\leq e^{-\frac{E[Y^{j}]\sigma^{2}}{4}}.
    \label{P28}
\end{equation}

Substituting (\ref{P27}) into (\ref{P28}), we have
\begin{equation}
    \text{Pr}\{Y^{j}>(1+\sigma)E[Y^{j}]\}\leq \frac{1}{n^{2}},
    \label{P29}
\end{equation}

where $\sigma=2\sqrt{\frac{2\log n}{E[Y^{j}]}}$.

Since $\lim\limits_{n\rightarrow +\infty}\frac{1}{n^{2}}=0$, we get

\begin{equation}
    Y^{j}\leq (1+\sigma)E[Y^{j}]=O\left(\sqrt{(1-q) n\log n}\right),\: j=1,\cdots,m.
    \label{P30}
\end{equation}

Using the inequality $\text{Pr}\{\cup_{i=1}^{m} A_{i}\}\leq \sum_{i=1}^{m}\text{Pr}\{A_{i}\}$, the probability that any small square serves more than $(1+\sigma)E[Y^{j}]$ S-D lines converges to 0, i.e., the  number of S-D lines served by each small square is $O\left(\sqrt{(1-q) n\log n}\right)$.

Using (\ref{P30}), the feasible data transmission rate is given by

\begin{equation}
    \lambda_{q}(n)\geq \frac{c}{\sqrt{(1-q)n\log n}},
\end{equation}
where $c=\Theta \left(\frac{W}{M^{2}}\right)$. 

According to Definition \ref{D3} and $N_{q} (n)=\frac{n(1-q)}{2}$,   the lower bound of  network capacity is 

\begin{equation}
    S_{q}(n)=\Omega\left(\sqrt{\frac{n(1-q)}{\log n}}\right).
    \label{L1}
\end{equation}

\ding{193} Upper bound

The upper bound on  network capacity asserts that no interference-free scheduling and routing scheme can achieve a capacity beyond this bound. According to the protocol model in Section \ref{s221},  the area consumed for a single-hop successful transmission is at least $\frac{\pi\left(\Delta^{\prime}r_{q}(n)\right)^{2}}{4}$. Thus, the total number of transmitter-receiver pairs that can successfully transmit  in the same time slot,  denoted by $F^{u}(n)$, is at most

\begin{equation}
   F^{u}(n) =\frac{4}{\pi\left(\Delta^{\prime}r_{q}(n)\right)^{2}}.
    \label{U1}    
\end{equation}

Let  $\overline{d}$  be the average Euclidean distance between source node and destination node. Since nodes are uniformly distributed in the unit square, each data packet requires at least $\frac{\overline{d}-o(1)}{r_{q}(n)}$ hops to reach the destination node. Therefore, the number of bits that the network needs to successfully transmit is at least
\begin{equation}
   F(n)\geq n(1-q)\lambda_{q} (n)\frac{\overline{d}-o(1)}{r_{q}(n)}.
    \label{U2}    
\end{equation}

Each node successfully transmit $W$ bits per time slot. Combining (\ref{U1}) and (\ref{U2}), we have
\begin{equation}
\begin{aligned}
F(n) &\leq W F^{u}(n) \\
&\iff \lambda_{q}(n)\frac{n(1-q)(\overline{d}-o(1))}{r_{q}(n)} \\
&\leq \frac{4W}{\pi\left(\Delta' r_{q}(n)\right)^{2}}.
\end{aligned}
\label{U3}
\end{equation}
where $\lambda_{q}(n)$ is the feasible data transmission rate.

Substituting $r_{q}(n)= \sqrt{\frac{\log n+\xi}{\pi (1-q)n}}$ from Lemma \ref{LL1} into  (\ref{U3}), we get
\begin{equation}
    \lambda_{q}(n)\leq \frac{c_{1}}{\sqrt{(1-q)n\log n}},
    \label{U4}
\end{equation}
where $c_{1}=\frac{4W}{\pi\Delta^{\prime}(\overline{d}-o(1))}$. 

Combining Definition \ref{D3}, the upper bound of  network capacity is 
\begin{equation}
    S_{q}(n)=O\left(\sqrt{\frac{n(1-q)}{\log n}}\right).
    \label{U5}
\end{equation}

From the above proof, for any network scheduling strategy and network  topology, (\ref{U5})  holds.

Combining (\ref{L1}) and (\ref{U5}), the network capacity is given by 
\begin{equation} 
S_{q}(n)=\Theta\left( \sqrt{\frac{n(1-q)}{\log n}}\right)\:\:\text{bits/s}.
\label{TH1251013}
\end{equation}

As \( q \to 0 \), Equation~(\ref{TH1251013}) reduces to $\Theta\left(\sqrt{\frac{n}{\log {n}}}\right)\:\:\text{bits/s}$, which is the classical result  in \cite{ref13}.

\end{proof}
We have the following important insights  from Theorem \ref{Th1}.

\begin{itemize}
\itemindent 2.8em
\item  The degradation of network capacity due to node failures is proportional to $\Theta\left(\sqrt{1-q}\right)$.   This is  because the minimum transmission radius required to maintain the connectivity of the  network  topology increases, thereby reducing the total number of active transmitter-receiver pairs  in the same  time slot and the number of hops from the source node to the destination node. 
\item The feasible data transmission rate  is $$\lambda_{q}(n)=\Theta\left(1/\sqrt{(1-q)\log n}\right)\:\:\text{bits/s}.$$ When $n$ is fixed, the feasible data transmission rate increases with the node failure probability. This is because the interference within the network decreases, thereby reducing the intensity of competition for communication resources. Moreover, $\lambda_{q}(n)$ converges to zero as the number of nodes increases, indicating that the feasible data transmission rate is not scalable under node failures. 
\item Increasing the number of redundancy  nodes to improve network capacity is not feasible. Let $n_{1}$ redundancy  nodes be deployed in the network and are responsible only for relaying data packets from other nodes. Assume all redundancy  nodes do not fail, i.e., they are available with probability 1 for constructing the network  topology. From Theorem \ref{Th1}, the network capacity after adding redundancy  nodes  is 
\begin{equation}
   S_{q}(n+n_{1})=\Theta\left(\sqrt{\frac{n(1-q)+n_{1}}{\log (n+n_{1})}}\right)\:\:\text{bits/s}.
   \label{eq251013}
\end{equation}

\:\:\:\:\:The feasible data transmission rate after adding redundancy  nodes is \scalebox{0.5}{$\Theta\left(\frac{1}{n(1-q)}\sqrt{\frac{np+n_{1}}{\log\: (n+n_{1})}}\right)$} bits/s. A very large number of redundancy nodes is required to increase the network capacity to $\omega+1$ times its original value. We provide an numerical example  as shown in Fig. \ref{Fig6R}.   For example,     when $q=0.2$, it requires  6657 redundancy  nodes to achieve three times the network capacity of a wireless ad hoc network with 1000 nodes. This number is much  greater than the initial node scale. In other words, it can be seen that adding redundancy  nodes enhances the connectivity of the network  topology rather than improving the network capacity. 
\begin{figure}[!t]
\centering

% 第一行
\begin{minipage}[t]{0.48\linewidth}
    \centering
    \includegraphics[width=\linewidth]{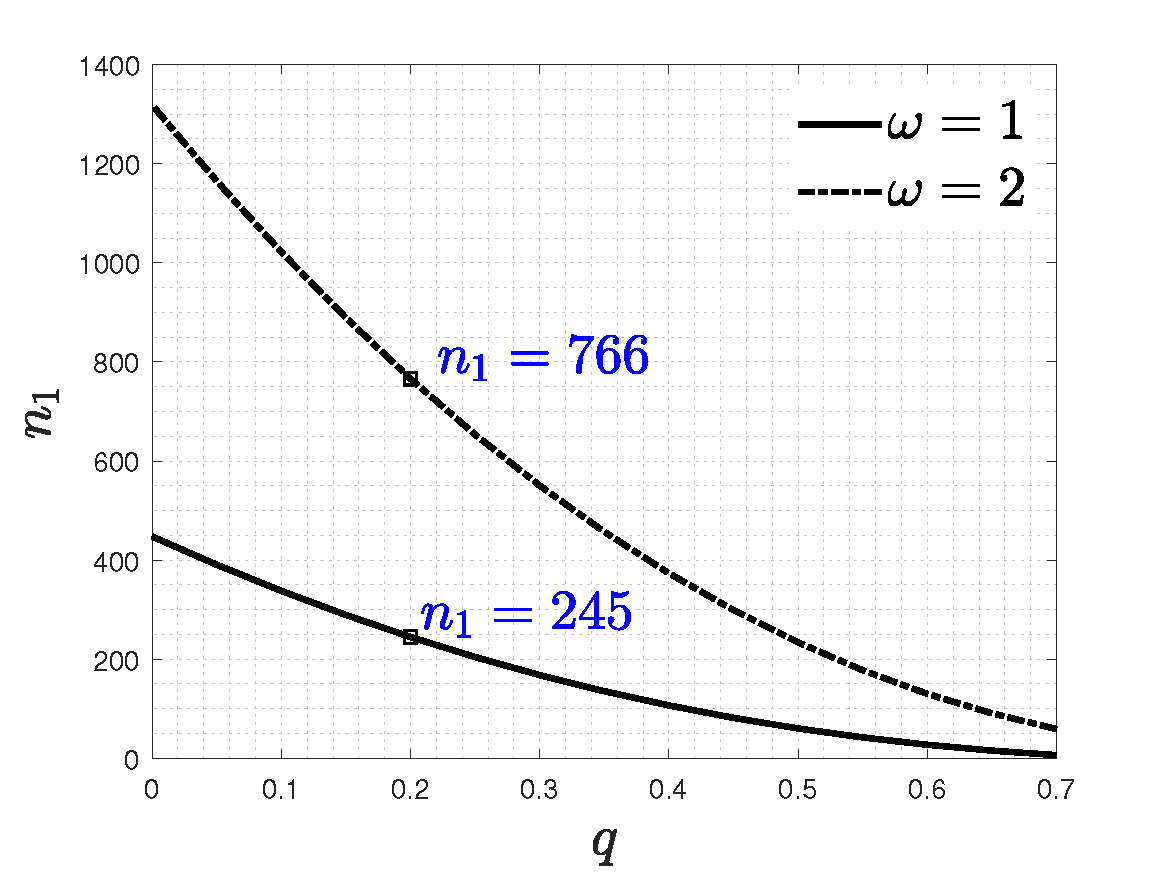}\\
    (a) $n=100$
\end{minipage}
\hfill
\begin{minipage}[t]{0.48\linewidth}
    \centering
    \includegraphics[width=\linewidth]{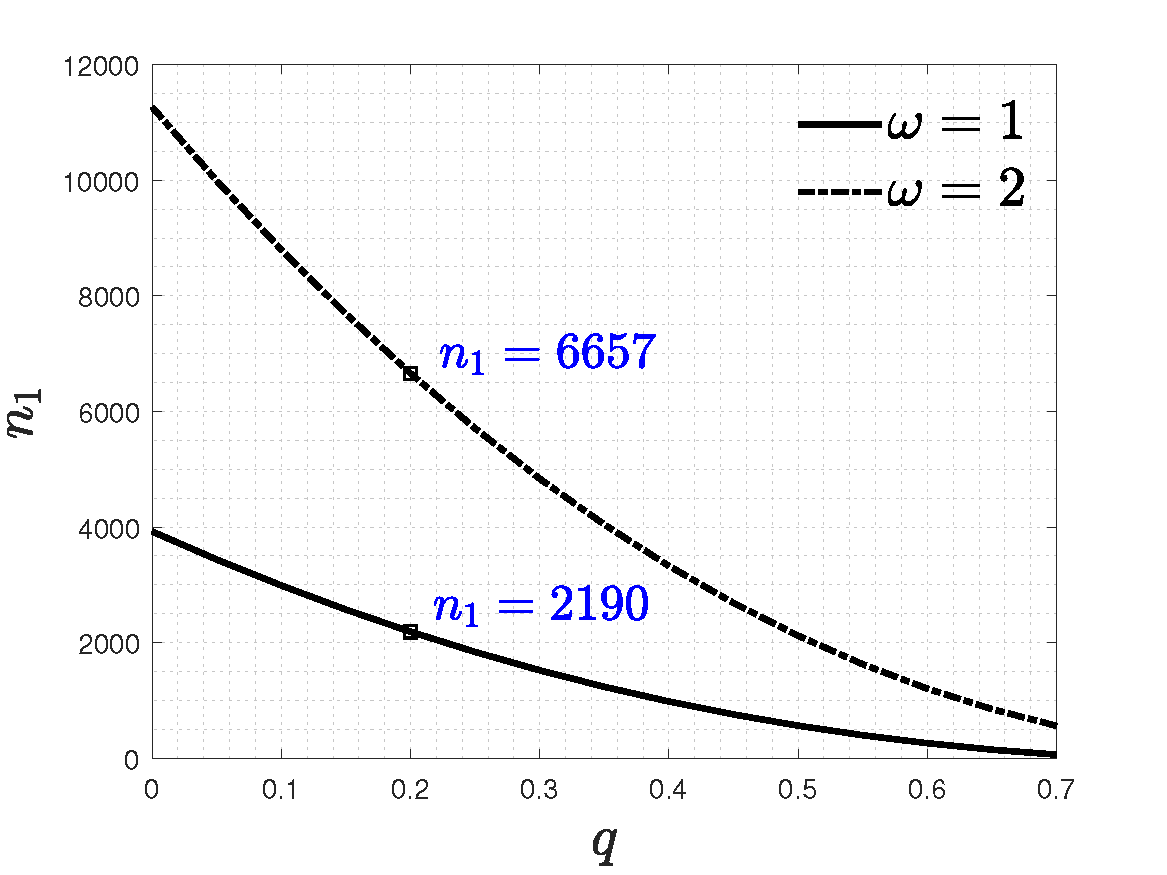}\\
    (b) $n=1000$
\end{minipage}

\vspace{0.3cm}

% 第二行
\begin{minipage}[t]{0.48\linewidth}
    \centering
    \includegraphics[width=\linewidth]{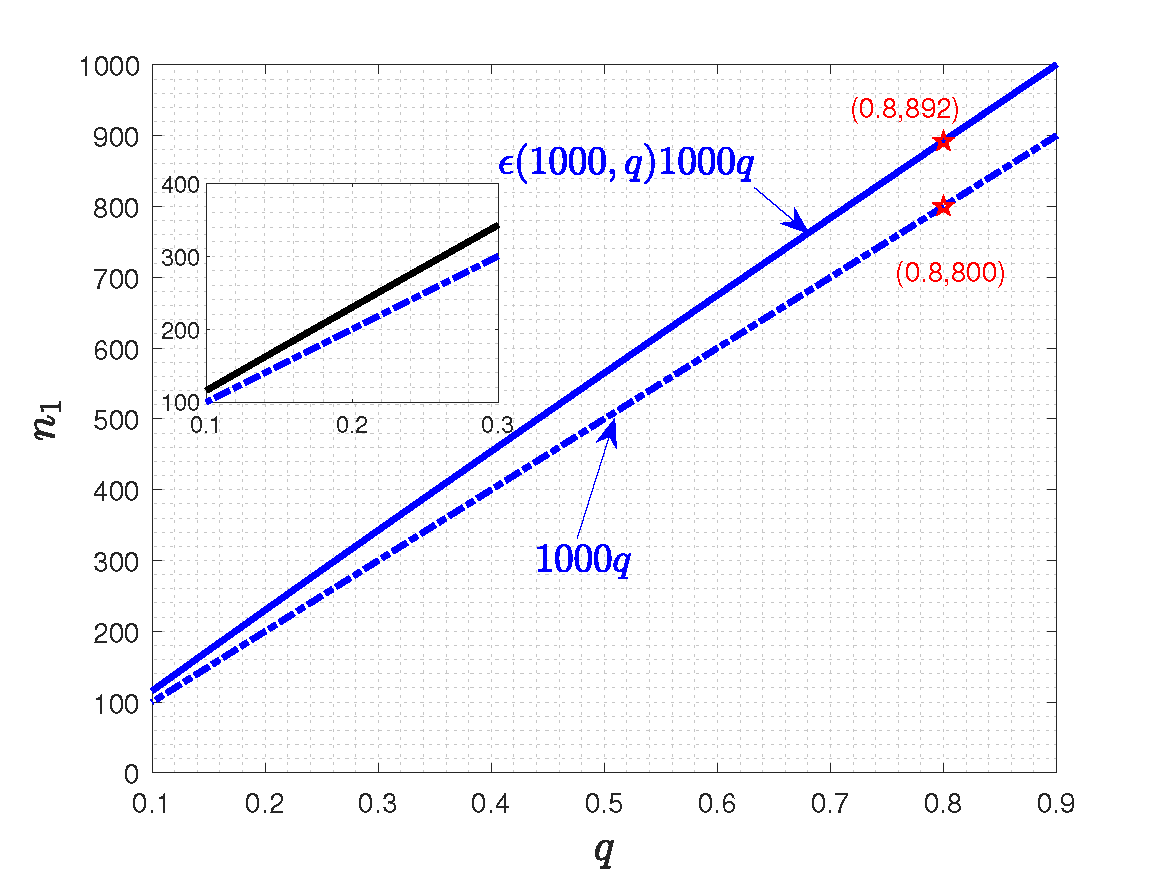}\\
    (c) $n_{1}$ vs.\ $q$
\end{minipage}
\hfill
\begin{minipage}[t]{0.48\linewidth}
    \centering
    \includegraphics[width=\linewidth]{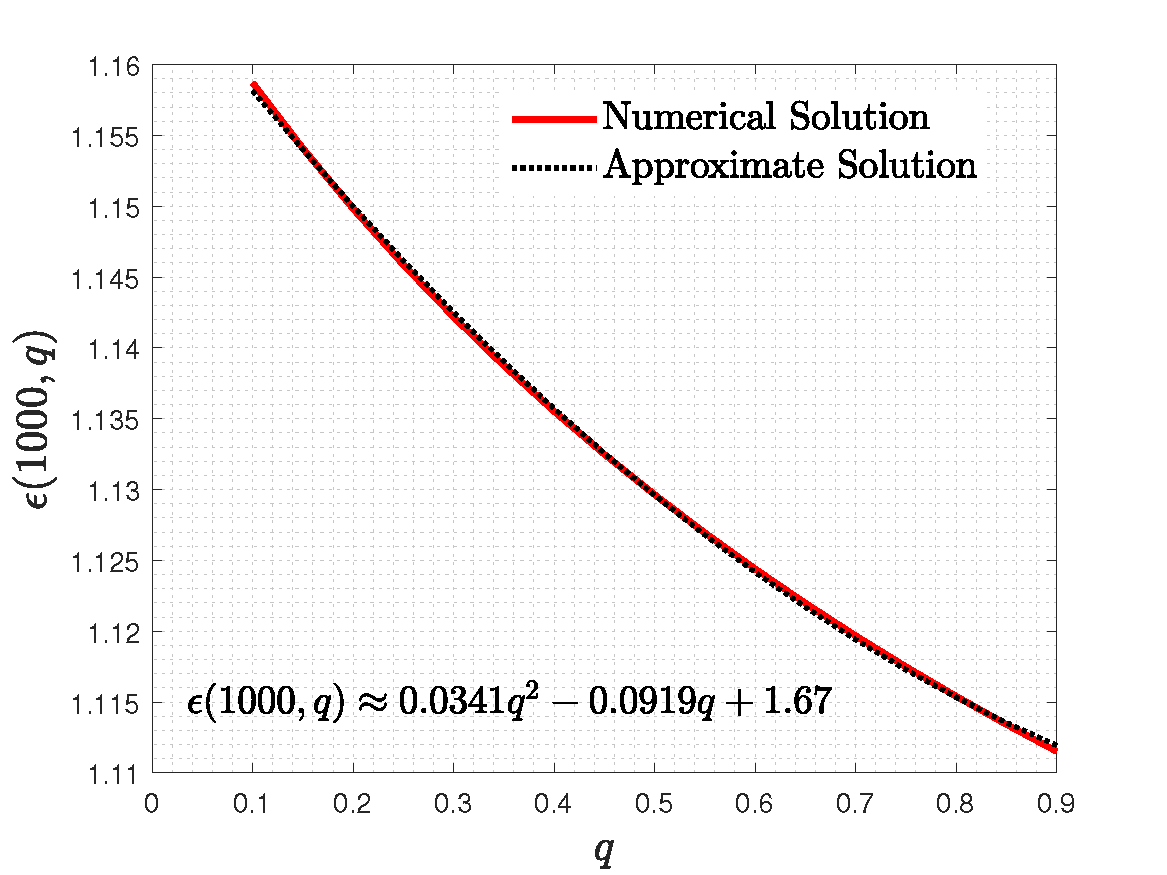}\\
    (d) $\epsilon(1000,q)$ vs.\ $q$
\end{minipage}

\caption{The number of redundancy nodes varies with the node failure probability.}
\label{Fig6R}
\end{figure}

\end{itemize}

\subsection{Some   corollaries}
Based on the previous analysis, some important corollaries are given as follows.

\begin{corollary}To maintain the network capacity at \scalebox{0.65}{$\Theta\left(\sqrt{\frac{n}{\log {n}}}\right)$ \:bits/s} in  wireless ad hoc networks with node failure probability,  at least $n_{1} = \epsilon(n,q) nq$ redundant nodes should be deployed, where $\epsilon(n,q)>1$ is the ratio  of the number of redundant nodes to $nq$. \label{C2510}
\end{corollary}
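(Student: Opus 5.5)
The plan is to start from the capacity formula after adding redundancy, equation (\ref{eq251013}), which follows from Theorem \ref{Th1}. There the $n_{1}$ redundant nodes are failure-free and there are $n(1-q)+n_{1}$ non-faulty nodes among $n+n_{1}$ deployed ones. The target is to reach the failure-free capacity $\sqrt{n/\log n}$, with the same hidden constant that Theorem \ref{Th1} gives at $q=0$. So I would impose
\begin{equation*}
\frac{n(1-q)+n_{1}}{\log (n+n_{1})}\;\geq\;\frac{n}{\log n},
\end{equation*}
and solve it for $n_{1}$.

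The first step is to rewrite this as $n(1-q)+n_{1}\geq n\frac{\log(n+n_{1})}{\log n}$. Subtracting $n(1-q)$ from both sides gives
\begin{equation*}
n_{1}\;\geq\; nq+n\left(\frac{\log(n+n_{1})}{\log n}-1\right)=nq+n\,\frac{\log\!\left(1+\frac{n_{1}}{n}\right)}{\log n}.
\end{equation*}
For $q>0$ the minimal admissible $n_{1}$ must be positive. Hence the correction term $n\log(1+n_{1}/n)/\log n$ is strictly positive. This forces $n_{1}>nq$, and I define $\epsilon(n,q)\triangleq n_{1}/(nq)$ at the smallest admissible $n_{1}$, which then satisfies $\epsilon(n,q)>1$.

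The second step is to show that such a minimal $n_{1}$ exists and is finite, so that $\epsilon(n,q)$ is well defined. I would look at $g(x)=n(1-q)+x-n\log(n+x)/\log n$. At $x=0$ it equals $-nq<0$. Its derivative is $1-\frac{n}{(n+x)\log n}$, which is positive once $\log n>1$, so $g$ is increasing and tends to $+\infty$. By continuity, $g$ has a unique root $x^{*}$, and $n_{1}=x^{*}$ is the minimal number of redundant nodes. At the root, $x^{*}=nq+n\log(1+x^{*}/n)/\log n>nq$, which gives $\epsilon(n,q)=1+\frac{\log(1+\epsilon q)}{q\log n}>1$. This implicit equation also shows that $\epsilon(n,q)\to 1$ as $n\to\infty$, which is consistent with the numerics in Fig. \ref{Fig6R}(d).

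The main obstacle is the $\Theta$-level meaning of ``maintain''. If only the order $\Theta(\sqrt{n/\log n})$ were required, then since $\sqrt{1-q}$ is a constant for fixed $q<1$, no redundancy would strictly be needed. I would therefore state explicitly that the comparison is made with the same leading constant, interpreting the statement as matching the failure-free capacity exactly at the level of the expression in (\ref{eq251013}). I would also note that this interpretation is the one under which the bound $\epsilon(n,q)>1$ is meaningful.
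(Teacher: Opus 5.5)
Your proposal starts where the paper's proof does: equate the expression in (\ref{eq251013}) with the failure-free value $n/\log n$ and solve for $n_1$. It differs in the final step. The paper only asserts that $\epsilon(n,q)>1$ ``can be verified by a simple numerical computation.'' You prove it instead. The rearrangement $n_1\ge nq+n\log(1+n_1/n)/\log n$ shows that every admissible $n_1$ exceeds $nq$ once $q>0$. The monotonicity of $g$ (for $\log n>1$) then shows that the minimal admissible $n_1$ exists, is unique, and satisfies $\epsilon=1+\log(1+\epsilon q)/(q\log n)$.

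What your route buys is a statement valid for all $q\in(0,1)$ and all $n$ with $\log n>1$, not just for the sampled parameter values behind Fig.~\ref{Fig6R}. Implicit differentiation of your fixed-point relation also gives the monotonicity of $\epsilon$ in $q$ and in $n$, which the paper only observes numerically. The paper's numerical route, for its part, supplies the concrete magnitudes of $n_1$ that it plots.

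To make your limit claim rigorous, apply $\log(1+x)<x$ in the fixed-point relation. This gives $1<\epsilon(n,q)<\log n/(\log n-1)$, which forces $\epsilon(n,q)\to 1$. The figure that matches this limit is Fig.~\ref{FigRV0406}(a), which plots $\epsilon(n,0.2)$ against $n$; Fig.~\ref{Fig6R}(d) instead fixes $n=1000$ and varies $q$.

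Your caveat about the word ``maintain'' is correct and worth keeping. For fixed $q<1$, $\Theta(\sqrt{n(1-q)/\log n})=\Theta(\sqrt{n/\log n})$, so the corollary only has content under the constant-matching reading. The paper implicitly adopts that reading when it writes an equality between the expressions inside $\Theta$.
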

\begin{proof}  Combining with (\ref{eq251013}), we have
$$S(n)= S_{q}(n+n_{1})\Longrightarrow  \frac{n}{\log {n}}=\frac{n(1-q)+n_{1}}{\log (n+n_{1})}\:\:\text{bits/s},$$
where $n_{1}=\epsilon(n,q) nq$. $\epsilon(n,q)>1$ can be verified by a simple numerical computation. 

\end{proof}

For $n=1000$,  the  number of redundant nodes under varying node failure probabilities is greater than $1000q$, as demonstrated in Fig.  \ref{Fig6R}(c).   $\epsilon(n,q)$ is a monotonically decreasing function of the node failure probability---see Fig. \ref{Fig6R}(d). Given the node failure probability $q>0$, the number of redundant nodes decreases non-linearly as the number of nodes ($n$). The relationship between $\epsilon(n,0.2)$ and $n$ under  $q=0.2$ is shown in Fig. \ref{FigRV0406}a. When  $n$ is sufficiently large or $q$ is sufficiently small, deploying approximately $nq$ redundant nodes is sufficient to maintain the network capacity at \scalebox{0.65}{$\Theta\left(\sqrt{\frac{n}{\log n}}\right)$ \:bits/s}.

\begin{figure}[!t]
\centering

\begin{minipage}[t]{0.85\linewidth}
    \centering
    \includegraphics[width=\linewidth]{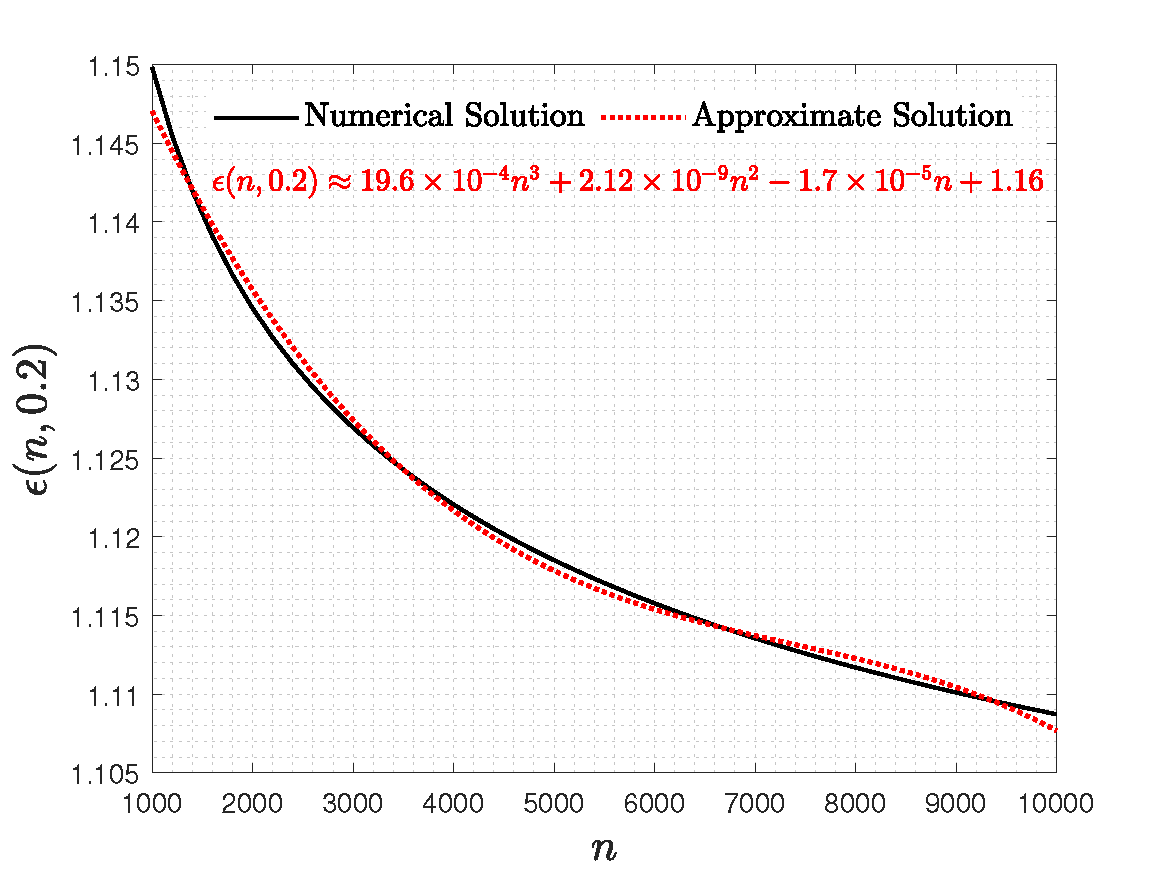}\\
    (a) $\epsilon(n,0.2)$ vs.\ $n$
\end{minipage}
\hfill
\begin{minipage}[t]{0.85\linewidth}
    \centering
    \includegraphics[width=\linewidth]{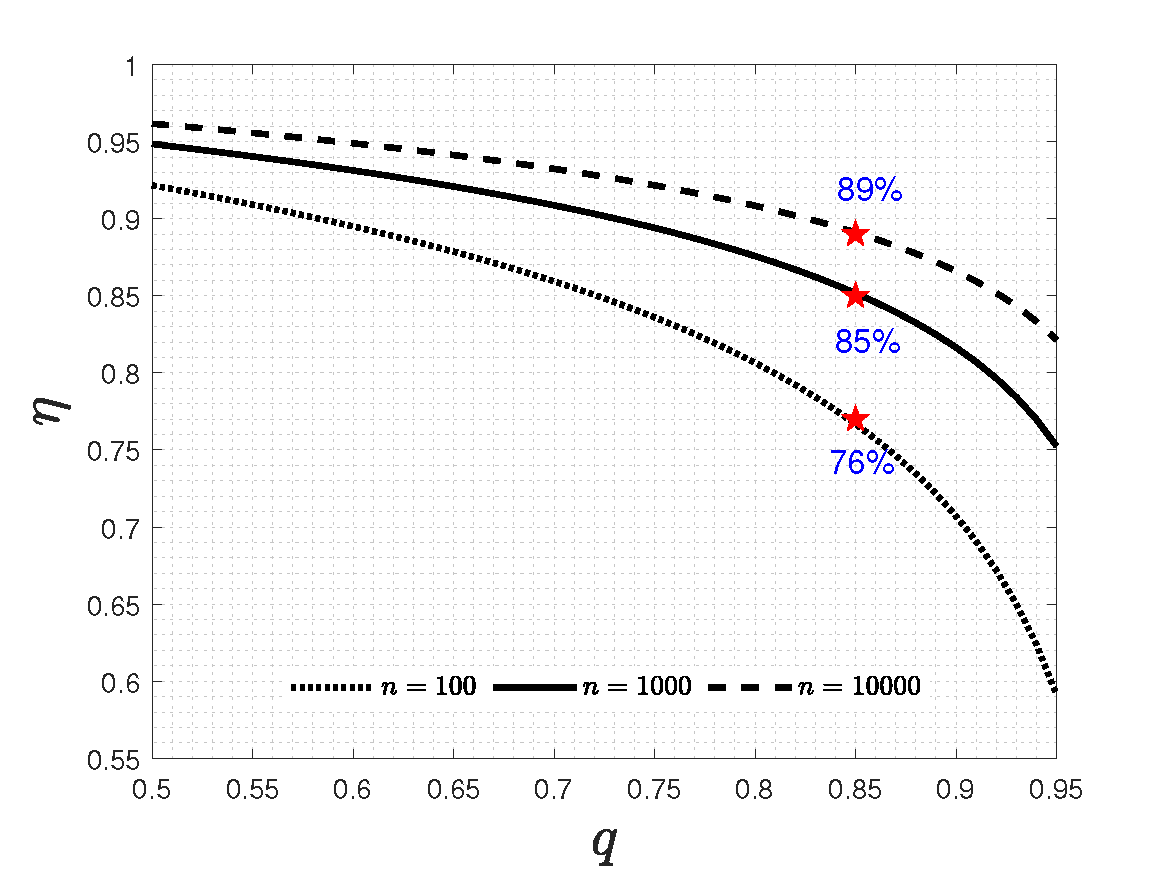}\\
    (b) 
\end{minipage}

\caption{Network capacity loss vs.\ node failure probability.}
\label{FigRV0406}
\end{figure}

\begin{corollary}The wireless ad hoc networks with $n$ nodes and node failure probability $q$ have lower network capacity than one with $n(1-q)$ nodes without considering node failures, i.e.,
\begin{equation*}
\eta \triangleq \sqrt{\frac{\log n(1-q)}{\log n}}\leq 1.
\label{C1}
\end{equation*}
\label{Corollary1}
\end{corollary}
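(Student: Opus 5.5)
The plan is to compare the two capacity expressions given by Theorem \ref{Th1}, taking the ratio so that all constants and the common $\sqrt{n(1-q)}$ factor cancel. For the network with $n$ nodes and failure probability $q$, Theorem \ref{Th1} gives $S_{q}(n)=\Theta\bigl(\sqrt{n(1-q)/\log n}\bigr)$. For the failure-free network with $n' = n(1-q)$ nodes, I will apply Theorem \ref{Th1} with $q=0$ and $n$ replaced by $n'$. Equivalently, I use the failure-free critical radius $r(n')$ from Lemma \ref{LL1}. This gives $S_{0}(n(1-q))=\Theta\bigl(\sqrt{n(1-q)/\log (n(1-q))}\bigr)$.

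Next I will form the ratio of the two leading terms. Because both capacities come from the same lower-bound and upper-bound arguments (the same $M$, $\Delta'$, $W$ and $\overline{d}$), the hidden constants agree. Concretely, I will reuse the explicit constants $c$ and $c_{1}$ from the proof of Theorem \ref{Th1}. The ratio is then
\begin{equation*}
\frac{S_{q}(n)}{S_{0}(n(1-q))}=\sqrt{\frac{\log (n(1-q))}{\log n}}=\eta .
\end{equation*}

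It remains to show $\eta\le 1$. Since $0\le q<1$, we have $n(1-q)\le n$. Because $\log$ is monotone increasing, $\log(n(1-q))\le \log n$, with equality only when $q=0$. Taking square roots gives $\eta\le1$. I will also note that $\log(n(1-q))=\log n+\log(1-q)$, so $\eta=\sqrt{1+\log(1-q)/\log n}\to 1$ as $n\to\infty$ for fixed $q$. This is consistent with the numerical loss rates quoted in the introduction.

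The main obstacle is that Theorem \ref{Th1} only determines the capacities up to $\Theta(\cdot)$, and a ratio of two $\Theta$-quantities is not automatically bounded by $1$. The cleanest fix is to state the comparison for the scaling expressions themselves, i.e., identical schemes with identical constants. I will justify this by checking that the constant in \eqref{U4} and the constant in the lower bound depend only on $W$, $M$, $\Delta'$ and $\overline{d}$, and not on $q$ or $n$. A second subtlety is that the point being made is the $\log n$ versus $\log(n(1-q))$ difference in the critical radius of Lemma \ref{LL1}. I will emphasize this: with failures, the non-faulty nodes must still handle connectivity at the scale $\log n$ set by all $n$ deployed nodes, rather than at the scale $\log(n(1-q))$.
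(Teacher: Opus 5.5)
Your route is the paper's. The paper's proof likewise takes $S_q(n)$ from Theorem~\ref{Th1} and the failure-free capacity $\Theta\bigl(\sqrt{n(1-q)/\log(n(1-q))}\bigr)$ for $n(1-q)$ nodes. It obtains the latter via the Gupta--Kumar method, while you specialise Theorem~\ref{Th1} to $q=0$, which amounts to the same thing. It then reads off $\eta$ as the ratio of the two scaling expressions, and explains $\eta\le1$ exactly as in your last remark, by comparing $r_q(n)$ with $r(n(1-q))$ from Lemma~\ref{LL1}. Your monotonicity argument for $\eta\le 1$ and the observation $\eta=\sqrt{1+\log(1-q)/\log n}\to 1$ are correct.

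The step that does not work as described is your bridge from the $\Theta$-statements to the literal equality $S_q(n)/S_0(n(1-q))=\eta$. Checking that the constants in Theorem~\ref{Th1} are independent of $q$ and $n$ is not enough. The lower-bound constant $c=\Theta(W/M^{2})$ and the upper-bound constant $c_1$ in (\ref{U4}) are different numbers, so the theorem only places the true ratio in $[\tfrac{c}{c_1}\eta,\tfrac{c_1}{c}\eta]$. Since $1-\eta=\Theta(1/\log n)\to 0$, no such sandwich can certify $S_q(n)\le S_0(n(1-q))$.

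Fixing a single scheme, namely the cell-based construction, does not settle it either. Its throughput is a constant times $1/r$, so the ratio becomes $r(n(1-q))/r_q(n)=\sqrt{(\log n+\log(1-q)+\xi')/(\log n+\xi)}$. This is at most $1$ only when $\xi'-\xi\le-\log(1-q)$, for example with matched offsets $\xi'=\xi$. The reason is that the connectivity offsets of Lemma~\ref{LL1} enter at the same $1/\log n$ order as $\log(1-q)$.

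So drop the claim that the hidden constants agree. State the corollary, as the paper implicitly does, as a comparison of the scaling expressions, equivalently of the critical radii with matched offsets. Under that reading your proof is complete.
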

\begin{proof}  If node failures is not considered, the network capacity corresponding to deploying  $n(1-q)$ nodes is \scalebox{0.65}{$S(n(1-q) )=\Theta\left(\sqrt{\frac{n(1-q)}{\log n(1-q)}}\right )\:\:\text{bits/s}$}, which can be derived using the proof method from \cite{ref13}. From Theorem 1, the network capacity of  wireless ad hoc networks with node size $n$ and the node failure probability $q$ is \scalebox{0.65}{$S_{q} (n) =\Theta\left(\sqrt{\frac{n(1-q)}{\log n}}\right)\:\:\text{bits/s}$}. In this case, the expected number of non-faulty nodes is $n(1-q)$. Based on the above results, Corollary \ref{Corollary1} is given.

\end{proof}

$\eta\leq 1$  holds because the critical transmission radius is given by \scalebox{0.65}{$r(n(1-q))=\sqrt{\frac{\log n(1-q)}{n(1-q)}}$}  when node failures are not considered (see Lemma \ref{LL1}). When node failures are considered,  however,  the critical transmission radius is \scalebox{0.65}{$r_{q}(n)=\sqrt{\frac{\log n}{n(1-q)}}$}. By combining the protocol model from Section \ref{s221},  the number of simultaneous transmit-receive node pairs supported by the network in our work is less than that reported in \cite{ref13}. This means that the loss of network capacity is the cost incurred to overcome the randomness of node failures. Under the same node failure probability, a greater number of nodes results in reduced the loss of network capacity, as shown in Fig. \ref{FigRV0406}b. When $q = 0.85$, for example, the network capacity loss rates corresponding to node scales of 100, 1000, and 10,000 are 24\%, 15\%, and 11\% respectively, compared to scenarios where node failures are not considered.

Using the same proof technique as Theorem \ref{Th1}, we can  derive the network capacity of wireless ad hoc networks in the three-dimensional space.

\begin{corollary}The network capacity of wireless ad hoc networks  deployed in three-dimensional space is 
$$S_{q}(n)=\Theta\left( \sqrt[\frac{3}{2}]{\frac{n(1-q)}{\log n}}\right)\:\:\text{bits/s},$$ 
where $q$ is the node failure probability. When $q=0$, our result degenerates to the classical result established in  \cite{ref181}.
\label{C22}
\end{corollary}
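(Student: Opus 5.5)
The plan is to repeat the two-part argument of Theorem \ref{Th1} in the unit cube $[0,1]^{3}$, changing only the geometry. \textbf{Step 1: the 3D analogue of Lemma \ref{LL1}.} Connectivity of the non-faulty nodes, which form a Poisson-like process of intensity $n(1-q)$, requires that no ball of volume $\frac{4}{3}\pi r^{3}$ around a non-faulty node be empty. Setting $n(1-q)\frac{4}{3}\pi r_{q}^{3}(n)=\log n+\xi$ and repeating the isolated-node argument of \cite{ref16} gives
\begin{equation*}
r_{q}(n)=\Theta\left(\left(\frac{\log n}{(1-q)n}\right)^{1/3}\right).
\end{equation*}
I will then tile the cube with small cubes of side $a_{q}(n)=r_{q}(n)/\sqrt{3}$. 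Any two nodes in adjacent cubes are then within range. Each cube has volume $\Theta(\log n/((1-q)n))$, so the Chernoff/union-bound argument of Proposition \ref{P1} shows that each cube contains $\Theta(\log n)$ non-faulty nodes w.h.p. Under the analogue of (\ref{eq15}), using the 3D site-percolation threshold in place of $q_{c}$, the cube lattice is connected.

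\textbf{Step 2: lower bound.} Group the cubes into clusters of $M^{3}$ cubes, with $M$ again a constant depending only on $\Delta'$ (the 3D version of (\ref{M7})). Each cube is then active a constant fraction $1/M^{3}$ of the time. Route each S-D pair along its straight line. The number of cubes crossed is $\Theta(\overline{d}/a_{q}(n))$. By the same double-counting identity (\ref{P24}), $m=\Theta(a_{q}^{-3})$ cubes share $N_{q}(n)\,\Theta(1/a_{q})$ crossings, so
\begin{equation*}
E[Y^{j}]=\Theta\left(N_{q}(n)a_{q}^{2}(n)\right)=\Theta\left((n(1-q))^{1/3}(\log n)^{2/3}\right).
\end{equation*}
A Chernoff bound plus a union bound over the $m=O(n)$ cubes shows that every cube carries $O(E[Y^{j}])$ lines w.h.p. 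Hence $\lambda_{q}(n)=\Omega\left(1/(N_{q}a_{q}^{2})\right)$. Multiplying by $N_{q}(n)$ gives $S_{q}(n)=\Omega(a_{q}^{-2})=\Omega\left(\left(\frac{n(1-q)}{\log n}\right)^{2/3}\right)$, which is the claimed $\sqrt[3/2]{\cdot}$ form.

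\textbf{Step 3: upper bound.} Under the protocol model, each successful transmission reserves a ball of radius $\frac{\Delta'}{2}$ times the hop length around its receiver. Since hops have length at most $r_{q}(n)$ and the reserved balls must be disjoint, at most $O(r_{q}^{-3})$ transmissions succeed per slot. The total demand is at least $n(1-q)\lambda_{q}(n)(\overline{d}-o(1))/r_{q}(n)$ bit-hops. Comparing the two gives $\lambda_{q}(n)=O\left(1/(n(1-q)r_{q}^{2})\right)$, so $S_{q}(n)=O(r_{q}^{-2})$, matching Step 2.

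Setting $q=0$ recovers $(n/\log n)^{2/3}$, the result of \cite{ref181}.

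The main obstacle is the upper bound. As in the paper's 2D proof, it implicitly assumes that every hop uses range $\Theta(r_{q}(n))$. A fully rigorous version would need the convexity argument of \cite{ref13}: bound $\sum(\text{hop length})^{3}$ by the volume, then use Hölder to relate it to the total hop length. That route gives a transport-capacity bound, and showing that hops much shorter than $r_{q}$ cannot help requires the connectivity-radius constraint. I would first simply adopt the paper's convention that all links have length $\Theta(r_{q}(n))$, as in (\ref{U1}).
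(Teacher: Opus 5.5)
Your proposal follows the paper's route: the paper justifies this corollary only by saying that the two-part argument of Theorem~\ref{Th1} carries over to three dimensions, and your instantiation is a faithful version of it. That instantiation is: crossing probability $\Theta(a_q^{2}(n))$ per cube, hence $S_q(n)=\Theta(r_q^{-2}(n))=\Theta\bigl((n(1-q)/\log n)^{2/3}\bigr)$, with the upper bound resting on the same fixed-range guard-zone convention as (\ref{U1}). As you essentially note, it is the guard zone scaling with the common range $r_q(n)$, not the bound hop length $\le r_q(n)$, that caps simultaneous transmissions at $O(r_q^{-3}(n))$.

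Two constant-level repairs, both order-preserving. Points in face-adjacent cubes of side $r_q(n)/\sqrt{3}$ can be $\sqrt{2}\,r_q(n)$ apart, so take $a_q(n)\le r_q(n)/\sqrt{6}$. At the exact critical radius a cube's mean occupancy is only about $\log n/(4\sqrt{3}\pi)$, so empty cubes exist w.h.p.; inflate the radius by a large constant factor (as Gupta--Kumar do) before invoking the Chernoff/union bound of Proposition~\ref{P1}.
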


With the same node failure probability, from Corollary \ref{C22} and Theorem \ref{Th1},  wireless ad hoc networks in three-dimensional space achieve higher network capacity than those in two-dimensional planes.

\subsection{Delay}
\begin{theorem}The delay of wireless ad hoc networks  is given by
$$D_{q}(n)=\Theta\left( \sqrt{\frac{n(1-q)}{\log n}}\right),$$
where $q$ is the node failure probability. 
\label{Th2}
\end{theorem}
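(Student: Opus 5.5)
The plan is to prove matching lower and upper bounds on $D_{q}(n)$, following the two-part structure of Theorem \ref{Th1}. Delay is measured in time slots, and per Definition \ref{D4} it is the average over source-destination pairs. I would decompose the delay of a packet into two parts: the number of hops it traverses, and the waiting time per hop under the interference-free scheduling of Section \ref{S23}. Since $M$ in (\ref{M7}) depends only on $\Delta^{\prime}$, each small square is activated once every $M^{2}=\Theta(1)$ slots. Hence, if queues stay bounded, each hop costs only a constant number of slots, and the delay should be of the same order as the expected hop count $E[H_{i}]$.

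\textbf{Lower bound.} From (\ref{P21}), $E[H_{i}]=\Theta(E[L_{i}]/a_{q}(n))$. For uniformly matched pairs in the unit square, $E[L_{i}]=\Theta(1)$. Every hop covers distance at most $r_{q}(n)=\sqrt{2}\,a_{q}(n)$, so any packet of pair $i$ needs at least $(L_{i}-o(1))/r_{q}(n)$ hops, and each hop takes at least one slot. Averaging over $i$ and using Lemma \ref{LL1} gives
\begin{equation*}
D_{q}(n)\geq c\,\frac{\overline{d}}{r_{q}(n)}=\Omega\left(\sqrt{\frac{n(1-q)}{\log n}}\right).
\end{equation*}
This bound holds for any scheme, by the same reasoning as (\ref{U2}).

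\textbf{Upper bound.} I would exhibit the scheme of Section \ref{S23}, operated at the rate $\lambda_{q}(n)$ achieving Theorem \ref{Th1}. Three ingredients are needed:
\begin{itemize}
\item Under condition (\ref{eq15}), the site-percolation graph is connected with probability 1, so the rerouting of Fig. \ref{Fig7}(b) around empty or failed squares adds at most a constant factor to the path length. The hop count therefore remains $O(1/a_{q}(n))$.
\item Each small square serves $O(\sqrt{(1-q)n\log n})$ S-D lines by (\ref{P30}).
\item Each small square holds $\Theta(\log n)$ non-faulty nodes by Proposition \ref{P1}.
\end{itemize}
Using the standard pipelining argument of \cite{ref19}, I would group packets so that, in each cycle of $M^{2}$ slots, every square forwards the packets of all lines it serves. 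Each square's traffic load $Y^{j}\lambda_{q}(n)$ is $O(1)$ per slot, so the square is stable and the per-hop queueing delay is $O(1)$ slots. The resulting bound is $D_{q}(n)=O(M^{2}\cdot 1/a_{q}(n))=O\left(\sqrt{n(1-q)/\log n}\right)$.

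\textbf{Main obstacle.} The hard step is the queueing part of the upper bound: showing that per-hop delay is $\Theta(1)$ rather than growing with load. Relaying at rate $\Theta(1/\sqrt{(1-q)n\log n})$ through squares that each carry $\Theta(\sqrt{(1-q)n\log n})$ flows means every square runs at constant utilization. A naive queue could then accumulate delay proportional to its load.

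I would first try the cell-level scheduling of \cite{ref19}, which treats each square as a server with deterministic service once every $M^{2}$ slots and uses a time-division schedule among the flows it carries. The goal is to show that each packet advances one square per cycle once the pipeline is full. If that fails, I would instead accept a throughput loss by a constant factor, lowering utilization to below one half and giving geometric-tail waiting times. This keeps the capacity order of Theorem \ref{Th1} while bounding the per-hop delay by a constant.
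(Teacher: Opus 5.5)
Your proposal is correct and follows the paper's argument: the paper's proof likewise takes the average S-D distance to be $\Theta(1)$ and the hop length to be $\Theta(r_{q}(n))$, giving $\Theta(1/r_{q}(n))$ hops, and then multiplies by a per-hop delay that it simply assumes is constant. Your split into a scheme-independent lower bound and an explicit pipelining upper bound (each square, in its one active slot per $M^{2}$-slot cycle, forwards the scaled-down packets of all $O(\sqrt{(1-q)n\log n})$ lines it serves, as in \cite{ref19}) makes that assumption rigorous. One small point: the bounded rerouting stretch should come from Proposition \ref{P1} (no square is empty), not from connectivity of the percolation graph, since connectivity alone does not bound detour lengths.
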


\begin{proof} According to the law of large numbers, the average distance between source node and destination node is  $$\overline{d}=\frac{\sum_{i=1}^{n(1-q)}d(i)}{n(1-q)}=\Theta(1),$$ where $d_{i}$ is the distance between source-destination pair $i$. 

Since each hop distance is $\Theta\left(r_{q}(n)\right)$, the average number of hops is \scalebox{0.5}{$\Theta\left(\frac{1}{r_{q}(n)}\right)$}. The delay is proportional to the average number of hops because the delay required for each hop is constant and independent of the number of nodes. Summarizing the above analysis,  the delay  is given by
$$D_{q}(n)=\Theta\left( \sqrt{\frac{n(1-q)}{\log n}}\right),$$
where $q$ is the node failure probability. 

\end{proof}

 To reduce the delay, reducing the number of hops is a feasible strategy by increasing the transmission radius. Meanwhile, the number of transmitter-receiver pairs that can successfully transmit  within the same time slot can also decrease (see Fig. \ref{Fig1}). This can degrade the network capacity. Thus, there exists an inherent  trade-off between delay and network capacity.

\section{Optimal trade-off between network capacity and delay}

Combining Theorems \ref{Th1} and \ref{Th2} in Section \ref{Set3}, the ratio of network capacity to delay is given by

\begin{equation}
\frac{S_{q}(n)}{D_{q}(n)}=O\left(1\right).
\label{S31}
\end{equation}

In terms of order, is the trade-off relationship given by Eq. (\ref{S31}) optimal?   To answer this fundamental question, the definition of the optimal trade-off between network capacity and delay is given as follows \cite{ref19}.

\begin{definition} Let the network capacity and delay of wireless networks be denoted by $S_{q}(n)$ and $D_{q} (n)$, respectively. The ratio $\frac{S_{q}(n)}{D_{q}(n)}$ is optimal if the following conditions are satisfied.

a) There exists a multi-hop  transmission strategy $\mathcal{M}$ such that the  network capacity and delay satisfy $S_{q}^{\mathcal{M}}(n)=\Theta\left(S_{q}(n)\right)$ and $D_{q}^{\mathcal{M}}(n)=\Theta\left(D_{q}(n)\right)$.

b) For any multi-hop  transmission strategy $\mathcal{M^{\prime}}$, $S_{q}^{\mathcal{M^{\prime}}}(n)=\Omega\left(S_{q}(n)\right)$ and $D_{q}^{\mathcal{M^{\prime}}}(n)=\Omega \left(D_{q}(n)\right)$,
where $q$ is the node failure probability.
\label{D6}
\end{definition} 

In Definition \ref{D6}, condition (a) ensures that there exists some strategy capable of achieving the corresponding network capacity and delay, while condition (b) indicates that for all strategies with multi-hop  transmission properties, the network capacity and delay are $S_{q} (n)$ and $D_{q} (n)$, respectively.

\begin{theorem} For all multi-hop transmission strategies, the optimal trade-off between network capacity and delay is given by
$$\frac{S_{q}(n)}{D_{q}(n)}=O\left(1\right),$$
where $q$ is the node failure probability.
\label{Th3}
\end{theorem}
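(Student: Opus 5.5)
We verify the two conditions of Definition \ref{D6} with $S_{q}(n)=\Theta\bigl(\sqrt{n(1-q)/\log n}\bigr)$ and $D_{q}(n)=\Theta\bigl(\sqrt{n(1-q)/\log n}\bigr)$; the ratio $S_{q}(n)/D_{q}(n)=\Theta(1)$, and in particular $O(1)$, then follows.

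\emph{Achievability (condition a).} We take $\mathcal{M}$ to be the strategy of Section \ref{S23}: tiling with squares of side $a_{q}(n)=r_{q}(n)/\sqrt{2}$, interference-free scheduling with a constant $M$ from (\ref{M7}), and routing along the S-D lines with rerouting. The lower-bound part of Theorem \ref{Th1} already shows that this $\mathcal{M}$ attains $S_{q}^{\mathcal{M}}(n)=\Omega\bigl(\sqrt{n(1-q)/\log n}\bigr)$. For the delay we argue as in Theorem \ref{Th2}, which gives $D^{\mathcal{M}}_{q}(n)=\Theta(1/r_{q}(n))$, and add two checks.
\begin{itemize}
\item The number of hops per packet is $\Theta(\overline{d}/a_{q}(n))=\Theta(1/r_{q}(n))$, using (\ref{P21}) and $\overline{d}=\Theta(1)$. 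Rerouting around empty squares only adds a constant factor, since condition (\ref{eq15}) keeps the site percolation graph in its connected phase.
\item The per-hop waiting time is $O(M^{2})=O(1)$ slots. This requires that the traffic load through each square matches the rate $\lambda_{q}(n)$, so that queues remain stable with bounded expected waiting per hop. We would justify this with the high-probability bound (\ref{P30}) on $Y^{j}$ together with the rate choice $\lambda_{q}(n)=\Theta\bigl(1/\sqrt{(1-q)n\log n}\bigr)$.
\end{itemize}
Hence $S^{\mathcal{M}}_{q}=\Theta(S_{q})$ and $D^{\mathcal{M}}_{q}=\Theta(D_{q})$.

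\emph{Converse (condition b).} Take an arbitrary multi-hop strategy $\mathcal{M}'$ with per-hop range $r\ge r_{q}(n)$; Lemma \ref{LL1} and Definition 1 rule out smaller ranges, since connectivity would fail. Combining the hop-count argument of Theorem \ref{Th2} with the area argument (\ref{U1})--(\ref{U3}) from the upper bound of Theorem \ref{Th1} gives
\[
D^{\mathcal{M}'}_{q}(n)=\Omega\!\left(\frac{\overline{d}}{r}\right),\qquad S^{\mathcal{M}'}_{q}(n)=O\!\left(\frac{W}{\Delta'^{2}\,\overline{d}\,r}\right).
\]
Multiplying the two bounds, the products $S\cdot D$ and $S\,r$ stay controlled. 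More usefully,
\[
\frac{S^{\mathcal{M}'}_{q}(n)}{D^{\mathcal{M}'}_{q}(n)}=O\!\left(\frac{1/r}{1/r}\right)=O(1)
\]
uniformly in $r$. No multi-hop strategy can push the ratio beyond a constant, and $\mathcal{M}$ attains this constant order, which proves optimality.

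\emph{Main obstacle.} The hardest step is the converse. The two bounds derived for a generic $\mathcal{M}'$ must use the same range $r$, yet a strategy may mix hop lengths. To handle this we would use the weighted form of the area argument, $\sum_{\text{hops}} r_{h}^{2}\le 4W/(\pi\Delta'^{2})$ per slot, and apply Cauchy--Schwarz, as in the Gupta--Kumar upper bound, to relate the total transmitted distance to the hop counts that drive the delay. Both quantities then scale through a single effective range $\bar r\ge r_{q}(n)$. A second point needing care is queueing in the achievability part: we must confirm that per-hop delay stays $O(1)$ rather than growing with the load on a square.
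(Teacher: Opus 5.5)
Your proposal is correct, and its key step is the paper's proof, but you organize the argument differently.

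\textbf{How the paper argues.} The paper never fixes a hop range. It sums over bit-hops and combines $\sum_b\sum_h r_b(h)\ge e\,\overline{d}$ with the area constraint $\sum_b\sum_h \pi(\Delta' r_b(h))^2/4\le WT$. Jensen's inequality (your Cauchy--Schwarz) then gives $(\lambda_q N_q T\overline{d})^2\le \frac{4WT}{\pi\Delta'^2}H$. Hence $S_q(n)\le c\,\overline{h}$ with $c=4W/(\pi(\Delta'\overline{d})^2)$, and the paper concludes from $D_q(n)=\Theta(\overline{h})$.

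\textbf{Your converse.} What you defer to your `main obstacle' paragraph is in fact the entire proof. Your fixed-$r$ calculation is its special case $r_b(h)\equiv r$, and it yields the same constant. Going straight to $S_q\le c\,\overline{h}$ also removes two loose ends.
\begin{itemize}
\item The restriction $r\ge r_q(n)$ and the `effective range $\bar r\ge r_q(n)$' are not justified. Connectivity constrains the transmission radius, not the hop lengths a strategy actually uses. They are also unnecessary, because the bound holds uniformly in the hop lengths.
\item Only $D\gtrsim\overline{h}$ (at least one slot per hop) is needed, so no queueing enters the converse.
\end{itemize}
Also drop the remark that $S\cdot D$ `stays controlled'. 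An upper bound on $S$ and a lower bound on $D$ say nothing about their product.

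\textbf{Your achievability half.} This part is extra relative to the paper, which invokes condition (a) of Definition~\ref{D6} only nominally and simply assumes a constant per-hop delay. It buys you that the bound is attained, i.e.\ the ratio is $\Theta(1)$. However, the step you flag is a real issue. A per-square load below capacity gives stable queues, but it does not by itself give an $n$-independent waiting time per hop. That needs an explicit queueing or packet-scaling argument, such as the fluid model of \cite{ref19}.
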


\begin{proof} Let the average distance of S-D lines be $\overline{d}$ and the feasible data transmission rate be $\lambda_{q}(n)$. After $T$ time slots,  the total number of bits carried by the wireless network is $e=\lambda_{q}(n)N_{q}(n)T$. The total number of hops required for bit $b$ from source to destination be $h(b)$ and  the $h$-th
hop distance  for bit  $b$ be $r_{b} (h)$. Based on $\sum_{h=1}^{h(b)}r_{b} (h)\geq \overline{d}$, we have
\begin{equation}
    \sum_{b=1}^{e}\sum_{h=1}^{h(b)}r_{b} (h)\geq e\overline{d}.
    \label{T32}
\end{equation}

Combining the protocol model from Section \ref{s221}, the area consumed for the successful transmission of each bit $b$ in the $h$-th hop is $\frac{\pi(\Delta^{\prime}r_{b}(h))^{2}}{4}$. The number of bits  transmitted from a source node within $T$ time slots is at most $WT$, where $W$  is the number of bits transmitted per  time slot by each source node. Hence, we have
\begin{equation}
\sum_{b=1}^{e}\sum_{h=1}^{h(b)}\frac{\pi(\Delta^{\prime}r_{b}(h))^{2}}{4}\leq WT\times 1.
    \label{T33}
\end{equation}

Let the total number of hops taken for the successful transmission of all bits be $H=\sum_{b=1}^{e}h(b)$. Since  $f(x)=x^{2},\:\: x>0$ is a convex function, we obtain
\begin{equation}
\left(\sum_{b=1}^{e}\sum_{h=1}^{h(b)} \frac{1}{H}r_{b}(h)\right)^{2}\leq \sum_{b=1}^{e}\sum_{h=1}^{h(b)} \frac{1}{H}\left(r_{b}(h)\right)^{2}.
    \label{T34}
\end{equation}

Combining (\ref{T32})-(\ref{T34}), the inequality is given by

\begin{equation}
    \left(\lambda_{q}(n)N_{q}(n)T\overline{d}\right)^{2}\leq \frac{4WT}{\pi(\Delta^{\prime})^{2} }H.
    \label{T35}
\end{equation}

From the above proof, it can be seen that for all multi-hop  transmission strategies, Eq. (\ref{T35}) holds with probability 1. Furthermore,  the average number of hops taken for each bit to successfully transmit from the source node to the destination node be $\overline{h}=\frac{H}{e}$. Substituting $\overline{h}$ into (\ref{T35}), we have

\begin{equation}
\lambda_{q}(n)N_{q}(n)\leq \frac{4W}{\pi (\Delta^{\prime}\overline{d})^{2}} \overline{h}.
    \label{T36}
\end{equation}

 Taking the expectation of both sides of (\ref{T36}), we obtain

\begin{equation}
    \lambda_{q}(n)E\left[N_{q}(n)\right]\leq E\left[\frac{4W}{\pi (\Delta^{\prime}\overline{d})^{2}} \overline{h}\right]=c\overline{h},
    \label{T37}
\end{equation}
where $c=\frac{4W}{\pi (\Delta^{\prime}\overline{d})^{2}}$,  which is independent of $n$.

By the condition (a) of Definition \ref{D6},  there exists a multi-hop  transmission strategy $M$ such that the following equation holds, i.e.,
\begin{equation}
    S_{q}(n)=\Theta \left( E[N_{q}(n)]\lambda_{q}(n)\right).
    \label{T38}
\end{equation}

Using (\ref{T37}) and (\ref{T38}), we have
\begin{equation}
    S_{q}(n)\leq c\overline{h} \iff \frac{S_{q}(n)}{ \overline{h}}=O(1).
    \label{T39}
\end{equation}

Since the delay for each hop is constant, the delay is proportional to the average number of hops, i.e., $ D_{q}(n)=\Theta \left(\overline{h}\right)$. Combining the above results, the optimal trade-off between network capacity and delay is 
$$\frac{S_{q}(n)}{D_{q}(n)}=O\left(1\right).$$
\end{proof}

In the sense of  order,   Theorem \ref{Th3}  holds for any multi-hop  transmission strategy. This indicates that increasing the network capacity comes at the cost of increasing the delay  and is independent of the node failure probability and spatial dimension.   For the scenarios with lower delay, using one-hop or fewer transmissions may be a preferable strategy. For the scenarios with high traffic load, multi-hop  transmission represents the optimal strategy. Using the multi-hop  transmission strategy cannot  simultaneously achieve both lower delay and higher network capacity.  

According to  Theorem \ref{Th3}, increasing the network capacity by adding redundancy relay nodes can also increase the delay. This further indicates that the primary role of adding redundancy relay nodes is to enhance the connectivity of the network  topology. Additionally, the conclusion of Theorem 3 holds regardless of the spatial dimension.

\begin{remark}
Compared with \cite{ref14}–\cite{ref24R}, Theorems \ref{Th1}, \ref{Th2} and \ref{Th3} respectively  establish the network capacity, delay and their optimal trade-off in wireless ad hoc networks subject to random node failures after deployment. Specifically, they quantitatively characterize the impact of random node failures on both network capacity and delay.  We also prove that the optimal trade-off between network capacity and delay is $\Theta(1)$, which is independent of the node failure probability.   
\end{remark}

\section{Conclusions}
In this paper, the network capacity, delay and their optimal trade-off relationship are provided when wireless ad hoc networks   encounter node failures.  We find that at least $\epsilon(n,q)nq$ redundant nodes need to be added to maintain the network capacity at \scalebox{0.65}{$\Theta\left(\sqrt{\frac{n}{\log n}}\right)$}\: bits/s, where $\epsilon(n,q)>1$. It is also shown that  any strategy with multi-hop properties cannot simultaneously achieve high network capacity and low delay. These results provide a preliminary theoretical explanation of how node failures affect the fundamental performance of large-scale wireless ad hoc networks.  In future work, we will design networking protocols by considering specific requirements, including network status, application scenarios and task requirements.

\appendix

\subsection{Network connectivity  under  realistic wireless channel conditions \label{AppendixA1}}

In the sensitivity analysis of network connectivity, the key wireless channel parameters considered are as follows:
\begin{itemize}
  \item \textit{Path loss:}  
 Large-scale attenuation of signal power as the distance between transmitter and receiver increases. The path loss function is typically modeled as  
  \(
  \ell(x) = x^{-\alpha},
  \)
  where \(\alpha\) is the path loss exponent.

  \item \textit{Log-Normal shadowing:}  
  Slow variations in signal strength caused by obstacles, resulting in fluctuations around the average path loss. The lognormal spread $\sigma$ follows a log-normal distribution $\mathcal{N}(0, \sigma^2)$.

  \item \textit{Rayleigh fading:}  
  Rapid small-scale fluctuations in signal amplitude due to multipath scattering in non-line-of-sight environments. The power gain \(G \) follows an exponential distribution,  
  \(
  G \sim \text{Exponential}(1),
  \)
\end{itemize}

Based on the above parameters, the received power at a distance \( d \) from a transmitting node is given by
\begin{equation}
   P_r(d) = P_t - 10 \alpha \log_{10}(d) + \sigma+G, 
\end{equation}
where \( P_t \)  is the transmission power and \( P_{\min} \) is the minimum required received power for successful communication. 

A communication link can be successfully established if \(  P_r(d) > P_{\min}  \); otherwise, it cannot.
 When \( \sigma = 0,G=0 \) and \( \alpha = 2 \), the connection rule reduces to a geometric condition (see~(\ref{Connec0121})).  By integrating key parameters of the wireless channel, we perform simulations to evaluate the network connectivity under different parameter settings (see Fig. \ref{Fig8R}).  Simulation results indicate that the set of links successfully  established under the geometric rule is a subset of those formed under more realistic wireless channel conditions, including path loss, shadowing, and Rayleigh fading. This confirms that the analysis based on Lemma \ref{LL1} is robust to the random fluctuations of wireless channel. As a result, the corresponding capacity results remain valid under practical wireless environments. As shown in Fig. \ref{Fig8R} (c) and \ref{Fig8R} (d), a higher path loss exponent corresponds to lower network connectivity. Moreover, the randomness of the wireless channel does not affect the asymptotic scaling of the critical transmission radius as stated in Lemma \ref{LL1}. 
\begin{figure}[!ht]
\centering
\subfloat[$\sigma=5$ dBm]{\includegraphics[width=0.5\linewidth]{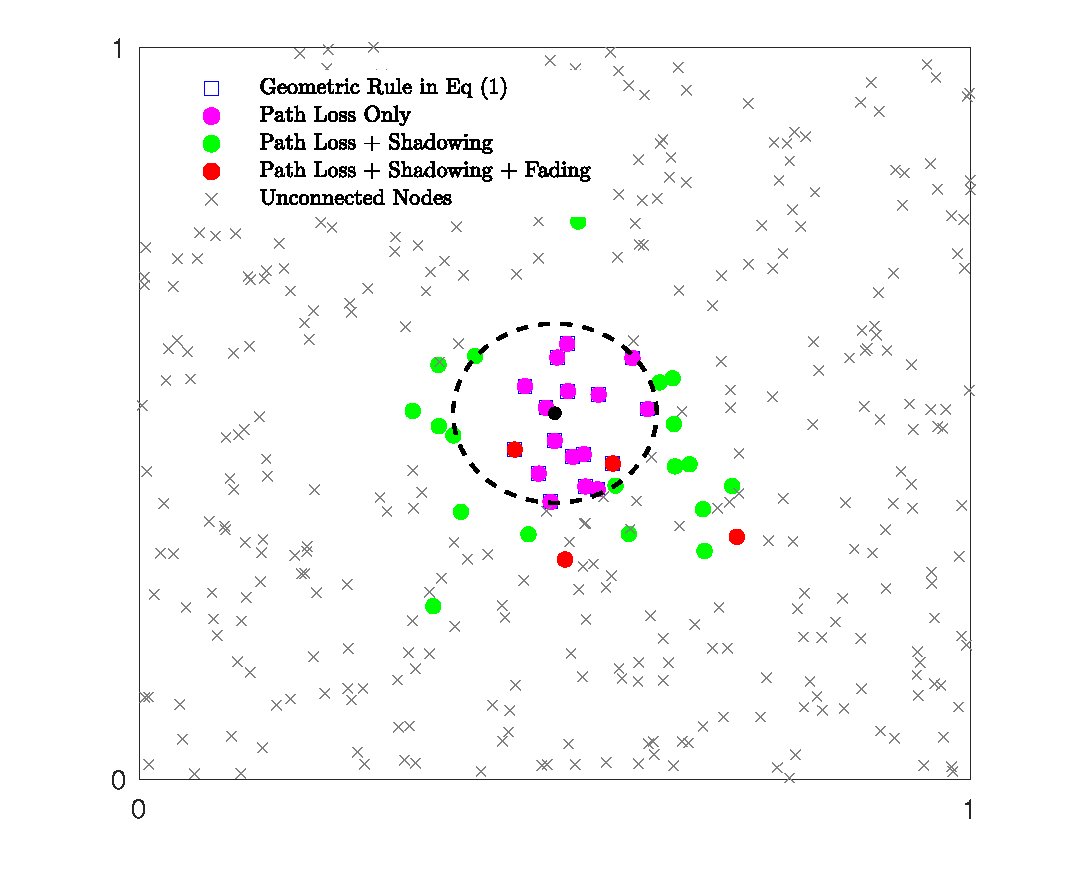}}
\hfill
\subfloat[]{\includegraphics[width=0.5\linewidth]{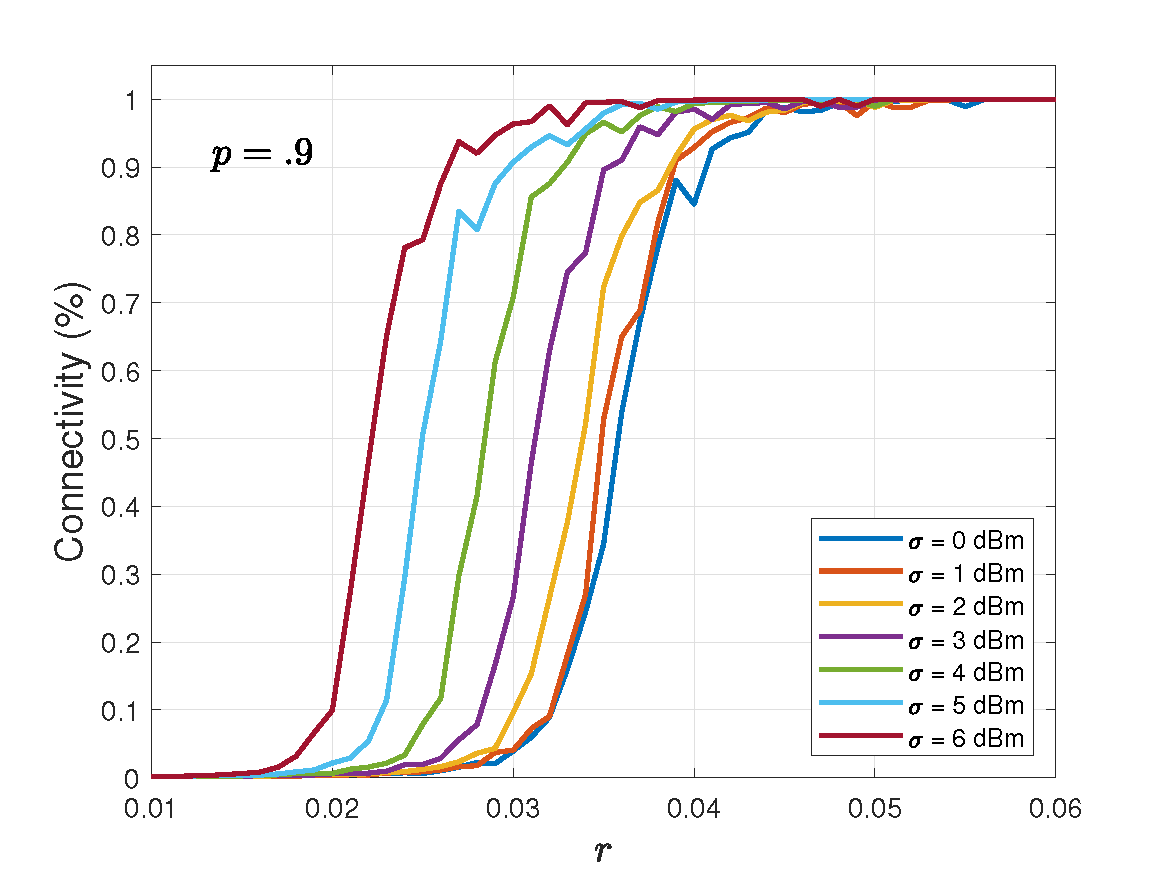}}

\subfloat[]{\includegraphics[width=0.5\linewidth]{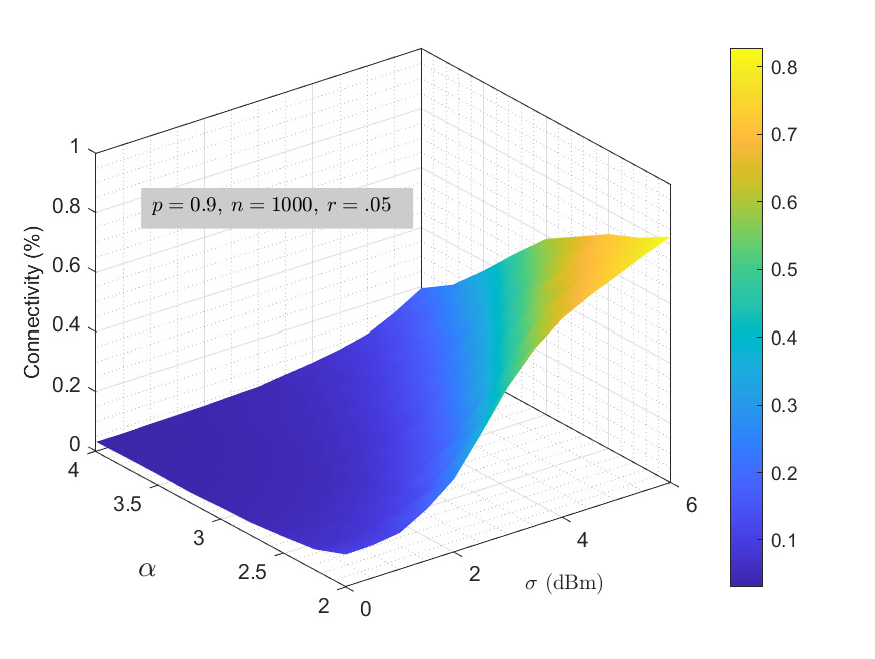}}
\hfill
\subfloat[]{\includegraphics[width=0.5\linewidth]{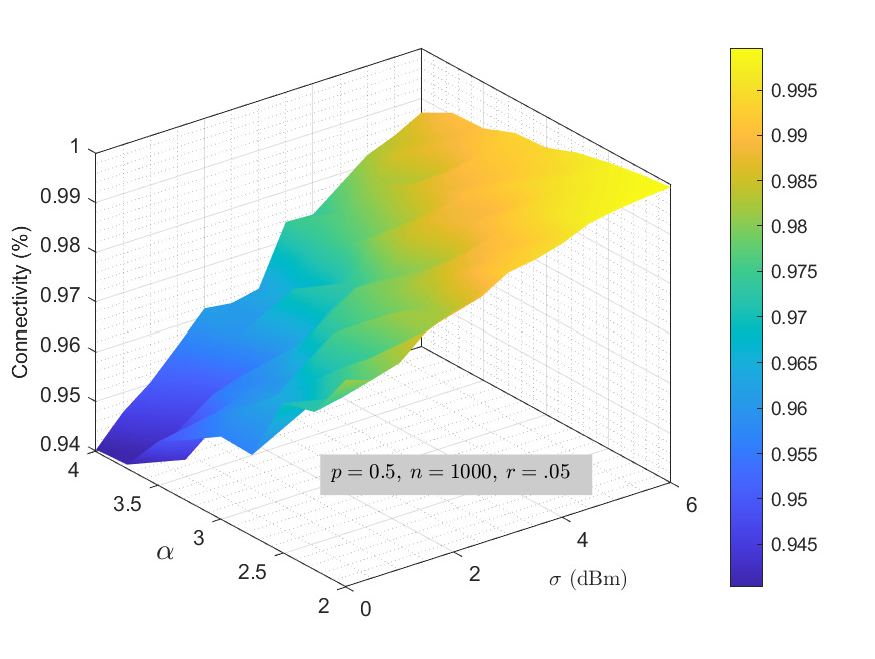}}

\caption{Network connectivity under different combinations of path loss exponent ($\alpha$) and lognormal spread ($\sigma$). $P_{min}=-80\:\text{dBm}$, $n=1000$, $\alpha=2.5$}
\label{Fig8R}
\end{figure}

\subsection{Proof of proposition \ref{P1} \label{AppendixA}}
Let the  number of non-faulty nodes in the small square $i$   be $Z_{n_{0}}$. Since each node independently participates in the construction of the network  topology with probability $p=1-q$, the  number of non-faulty nodes is $n_{0}=n(1-q)$. Let $I_{j}$  be the indicator  for whether node $j$ falls within the small square $i$,  which is defined by
\begin{equation}
I_j = 
\begin{cases} 
0, & \text{node } j \text{ falls within small square } i, \\\\
1, & \text{otherwise}.
\end{cases}, 
\label{A1212}
\end{equation}
where $\quad j = 1, \dots, n_0$.

According to the geometric probability model,  $\text{Pr}\{I_{j}=1\}=a_{q}^{2}(n)$ and $\text{Pr}\{I_{j}=0\}=1-a_{q}^{2}(n)$.   Using  (\ref{A1212}), $Z_{n_{0}}$ is given by   
\begin{equation}        
Z_{n_{0}}=\sum_{j=1}^{n_{0}}I_{j}.		 
\label{eq9}  
\end{equation}

From the above analysis, $Z_{n_{0}}$ follows a binomial distribution with parameters ($p_{n},n_{0}$), where $p_{n}=\text{Pr}\{I_{j}=1\}$. Using Chebyshev's inequality, we obtain
\begin{equation}    
\text{Pr}\{Z_{n_{0}}\geq (1+\sigma)c_{0}\log n\}\leq e^{-\frac{c_{0}\sigma^{2}}{1+\sigma}\log n}=n^{-\frac{c_{0}\sigma^{2}}{1+\sigma}},    
\label{eq10}
\end{equation}
where $c_{0}=\frac{1}{\pi (1-q)}$ and $0<\sigma<1$.

Let  $Q$ denote as the event that at least one small square contains more than $(1+\sigma) c_{0} \log n$ non-faulty nodes. Using $\text{Pr}\{\cup_{i=1}^{m} A_{i}\}\leq \sum_{i=1}^{m}\text{Pr}\{A_{i}\}$ and (\ref{eq10}), we have
\begin{equation}
\begin{aligned}
\text{Pr}\{Q\}
&\leq \sum_{i=1}^{m} \text{Pr}\{Z_{n_{1}} \geq (1+\sigma)c_{0}\log n\} \\
&= m n^{-\frac{c_{0}\sigma^{2}}{1+\sigma}} \overset{(a)}{\leq} c n^{1-\frac{c_{0}\sigma^{2}}{1+\sigma}},
\end{aligned}
\label{eq11}
\end{equation}
where $m=\frac{1}{a_{p}^{2}(n)}$ and $c=\pi p$. ($a$) holds using the inequality $\frac{\pi n(1-q)}{\log n}<\pi n(1-q)$. 
 For any sufficiently small $\sigma>0$,  $c_{0}>\frac{1+\sigma}{\sigma^{2}}$ holds.  Hence, $\lim\limits_{n\to \infty}\text{Pr}\{Q\}=0$.
Similarly to (\ref{eq11}), $F$ represents the event that at least one small square contains fewer than $c_{0}(1+\sigma)  \log n$ non-faulty nodes. We have 
\begin{equation}
\begin{aligned}
\text{Pr}\{F\}
&\leq \sum_{i=1}^{m}\text{Pr}\{Z_{n_{1}} \leq (1-\sigma)c_{0}\log n\} \\
&= m n^{-\frac{c_{0}\sigma^{2}}{2}} \leq c n^{1-\frac{c_{0}\sigma^{2}}{2}}.
\end{aligned}
\label{eq13}
\end{equation}

 Let $c_{0}>\frac{2}{\sigma^{2}} >1$,   $\lim\limits_{n\to \infty}\text{Pr}\{F\}=0$. Using (\ref{eq11}) and (\ref{eq13}), we obtain $Z_{n_{1}}=  \Theta \left(\log n\right)$.

\bibliographystyle{IEEEtran}
\bibliography{Reference.bib}

\end{document}